\theoremstyle{plain}
\newtheorem{theorem}{Theorem}[section]
\newtheorem{corollary}[theorem]{Corollary}
\theoremstyle{remark}
\newtheorem{definition}[theorem]{Definition}
\begin{document}

\begin{frontmatter}
\title{Multi-Quantile Estimators for the Parameters of Generalized Extreme Value Distribution}
\runtitle{Quantile-based estimators for $\xi \in \mathbb R$}

\begin{aug}
\author[A]{\fnms{Sen}~\snm{Lin}\thanks{{Corresponding author.}}\ead[label=e1]{slin31@uh.edu}}
\author[B]{\fnms{Ao}~\snm{Kong}\ead[label=e2]{aokong@nufe.edu.cn}
}
\author[A]{\fnms{Robert}~\snm{Azencott}\ead[label=e3]{robertazencott@gmail.com}}

\address[A]{Department of Mathematics,
University or Houston\printead[presep={,\ }]{e1,e3}}

\address[B]{School of Finance,
Nanjing University of Finance and Economics\printead[presep={,\ }]{e2}}
\end{aug}

\begin{abstract}
We introduce and study Multi-Quantile estimators for the parameters $\xi, \sigma, \mu$ of Generalized Extreme Value (GEV) distributions to provide a robust approach to extreme value modeling. Unlike classical estimators, such as the Maximum Likelihood Estimation (MLE) estimator and the Probability Weighted Moments (PWM) estimator, which impose strict constraints on the shape parameter $\xi$, our estimators are always asymptotically normal and consistent across all values of the GEV parameters. The asymptotic variances of our estimators decrease with the number of quantiles increasing and can approach the Cramér-Rao lower bound very closely whenever it exists. Our Multi-Quantile estimators thus offer a more flexible and efficient alternative for practical applications. We also discuss how they can be implemented in the context of Block Maxima method.
\end{abstract}


\begin{keyword}
\kwd{Generalized Extreme Value distribution}
\kwd{Quantile-based estimation method}
\kwd{Block Maxima}
\end{keyword}

\end{frontmatter}

\section{Introduction}
For long sequence $X_1, X_2, \ldots, X_n$ of i.i.d. random variables, the distributions of extreme values $Y_n = \max(X_1, \ldots, X_n)$ are often modeled, after adequate affine rescaling, by the well-known family of Generalized Extreme Value (GEV) distributions denoted as $G_{\theta}$, parameterized by $\theta = (\xi, \mu, \sigma) \in \mathbb{R}^2 \times \mathbb{R}^+$. Recall that the Cumulative Distribution Function (CDF) $G_{\theta}$ is given by:
\begin{equation*}
G_{\theta}(y) =
\begin{cases}
\exp\left(-\left(1 + \xi\left(\frac{y - \mu}{\sigma}\right)\right)^{-1/\xi}\right), & \xi \neq 0 \\
\exp\left(-\exp\left(-\frac{y - \mu}{\sigma}\right)\right), & \xi = 0
\end{cases}
\end{equation*}
so that each $G_{\xi, \mu, \sigma}$ is derived from $G_{\xi, 0, 1}$ by centering and rescaling.

A cumulative distribution function $F$ belongs to the domain of attraction $ D(\xi)$ of $G_{\xi, \mu, \sigma}$ if one can find sequences $a_n > 0$ and $b_n$ such that for any i.i.d. sequence $X_1, \ldots, X_n,\ldots$ with a common distribution $F$, then $Z_n = \left(\max\{X_1, \ldots, X_n\} - b_n\right)/a_n$ converges in distribution to $G_{\xi, \mu, \sigma}$. The set $D(\xi)$ clearly does not depend on the location and scale parameters $(\mu, \sigma)$. Moreover $D(\xi)$ has been fully characterized by classical results (see below and \cite{fisher1928},\cite{gnedenko1943}). For example, the Gaussian and exponential distribution belong to $D(0)$, whereas the uniform distribution on an interval belongs to $D(-1)$.

Given an i.i.d. sequence $X_1, \ldots, X_n, \ldots$ with an unknown common distribution $F$, assumed to belong to $D(\xi)$ for some unknown $\xi$, accurate estimation of $\xi$ is often a key question in practical modeling of extreme events in stock markets , climate evolution , hydrology analysis, etc (\cite{Lin2024}).

Several publications have proposed estimators $\hat{\theta}_n$ of $\theta = (\xi, \mu, \sigma)$. Explicit asymptotic distributions for $\sqrt{n}(\hat{\theta}_n - \theta)$ have been derived in a few papers, but the associated asymptotic results always require strong restrictions on the shape parameter $\xi$. Let us recall the main previous rigorous asymptotic results.
\begin{enumerate}
    \item Maximum Likelihood Estimation (MLE) method: After initial studies by \cite{prescott1980} and \cite{hosking1985a}, asymptotic normality of MLE was confirmed by \cite{bucher2017}, but \emph{only under the constraint $\xi > -1/2$}. Indeed when $\xi\leq -1/2$, local maxima of the log-likelihood do not exist. 
    \item Probability Weighted Moments (PWM) method: After PWM estimators were introduced  by \cite{hosking1985b}, their asymptotic normality was proved by \cite{ferreira2015}, but \emph{only under the constraint $\xi < 1/2$}. Indeed when $\xi\geq 1/2$, the second order moment of $G_{\xi, \mu, \sigma}$ does not exist.
\end{enumerate}

{Due to the constraints on $\xi$, it is not possible to determine the appropriate estimator in advance, as the value of $\xi$ is unknown.} In this paper, we introduce and study new \emph{Multi-Quantile} (MQ) estimators  $\hat{\theta}_n$ of $\theta = (\xi, \mu, \sigma)$, based on any fixed number $k$ of empirical quantiles. These estimators are asymptotically normal for all $(\xi, \mu, \sigma)$, \emph{without any restriction on the shape parameter $\xi$}. The asymptotic $3\times 3$ covariance matrix of $\hat\theta_n$ is explicitly computable. Moreover, as $k$ increases, the asymptotic variance of $\hat\xi_n$ decreases and tends to the optimal Cramer-Rao bound whenever this bound exists. For parameter estimation of Generalized Pareto distribution, Multi-Quantile estimators were studied by \cite{castillo1997}. 

Our theoretical results are presented in Sections \ref{section: 3q}, \ref{section: mq}, and \ref{section: efficient}. Section \ref{section: compare} provides a detailed analysis comparing the theoretical and empirical accuracy of our MQ estimator with two previously studied estimators (MLE, PWM). Section \ref{section: block maxima} focuses on combining our MQ estimators with the Block Maxima (BM) setup introduced by \cite{gumbel1958} and studied recently in \cite{ferreira2015}, \cite{dombry2015}, \cite{dombry2019}, \cite{padoan2024}. We again prove asymptotic normality for our MQ estimators in the BM setup.
\section{Asymptotic normality of empirical quantiles}\label{section: notation}
Let $X_1, \ldots, X_N$ be an i.i.d. random sample. Let $F(x)$ be the CDF of the $X_j$. Assume that the support of $F$ is a (possibly infinite) interval $(a,b)$, and that for $a<x<b$, $F(x)$ has a positive and continuous density $f(x) = F'(x)$. Fix any \emph{not necessarily ordered} set of $k$ \emph{distinct} percentiles $\boldsymbol{q}=\left[q_1,\dots,q_k\right]$ with $0<q_i<1$. Denote $T_i$ the true $q_i$-quantile of $F$ and $\hat{T}_i$ the empirical $q_i$-quantile of $X_1, \ldots, X_N$. Let $\boldsymbol{T(q)}=\left[T_1,\dots,T_k\right]$ and $\boldsymbol{\hat{T}(q)}=\left[\hat{T}_1,\dots,\hat{T}_k\right]$.

 Then (see Cor. 21.5 in \cite{vandervaart1998}) the random vector $\sqrt{N}\left(\boldsymbol{\hat T(q)-T(q)}\right)$ is  asymptotically normal with mean $0$ and asymptotic covariance matrix $\kappa$ given by:
\begin{equation}\label{eq:Mij}
\kappa_{i,j}=\kappa_{j,i}=\frac{\max(q_i,q_j)-q_iq_j}{f(T_i)f(T_j)} \text{ for all }i,j=1\dots k
\end{equation}
when $N\to\infty$.
Note that since the $q_i$ are distinct , the matrix $\kappa$ is always invertible. 
We now apply this generic result when $F$ is a GEV distribution $G_\theta$. For any percentile $0<q_i<1$, the $q_i$-quantile $T_i$ of $G_\theta$ is given by:
\begin{align}\label{eq:quantile}
T_i=\begin{cases}
    \mu + \frac{\sigma}{\xi} \left(\exp(-\xi LL_i)-1\right) &\text{when}\ \xi\neq 0 \\
    \mu - \sigma LL_i &\text{when}\ \xi = 0
\end{cases}
\end{align}
where $LL_i=\log(-\log(q_i))$.

The support $spt(\theta)$ of $G_\theta$ is given by:
\begin{align*}
spt(\theta)&=
\begin{cases}
    (\mu-\sigma/\xi , +\infty ) &\text{for}\ \xi > 0\\
(-\infty , \mu-\sigma/\xi ) &\text{for}\ \xi < 0 \\
(-\infty , +\infty) &\text{for}\ \xi = 0
\end{cases}
\end{align*}

When $\xi\neq 0$, the density $g_\theta(x)$ of $G_\theta$ is given by, for all $x\in spt(\theta)$:
\begin{align}
g_{[\xi,0,1]}(x) &= (1+\xi x)^{-1+1/\xi} \exp\left(-(1+\xi x)^{-1/\xi}\right) \\
g_{[\xi,\mu,\sigma]}(x) &= \frac{1}{\sigma} g_{[\xi,0,1]}\left(\frac{x - \mu}{\sigma}\right) \label{eq:gtheta}
\end{align}

When $\xi=0$, one has for all $x \in \mathbb{R}$:
\begin{equation}
g_{[0,\mu,\sigma]}(x) = \exp\left(-\frac{x - \mu}{\sigma}\right)\exp\left(-\exp\left(-\frac{x - \mu}{\sigma}\right)\right)
\end{equation}

Fix any \emph{not necessarily ordered} set of \emph{distinct} percentiles $\boldsymbol{q}=[q_1,\dots,q_k]$. Let $\boldsymbol{T(q)}$ be the associated vector of  true quantiles of  $G_\theta$. Given $N$ i.i.d. observations sampled from $G_\theta$, denote $\boldsymbol{\hat{T}(q)}$ the vector of empirical quantiles defined by $\boldsymbol{q}$. Since $g_\theta(x)= G'_\theta(x)>0$ for all $x \in spt(\theta)$, asymptotic normality of empirical quantiles will hold as $N\to\infty$. More precisely, $\sqrt{N}(\boldsymbol{\hat{T}(q)} - \boldsymbol{T(q)}) $ is  asymptotically normal, with asymptotic mean $0$ and covariance matrix $\Sigma_T$ given by
\begin{equation}\label{eq:Sigma.k}
\Sigma_T(i,j)=\sigma^2 \frac{\max(q_i,q_j)-q_iq_j}{q_iq_j(\log(q_i)\log(q_j))^{1+\xi}} \;\;
\text{for all}\; i,j = 1\dots k
\end{equation}
This equation is derived from \eqref{eq:Mij}, after computing $g_\theta(T_i), g_\theta(T_j)$ using equations \eqref{eq:quantile} and \eqref{eq:gtheta}. As already noted above, since the $q_i$ are distinct, the $k \times k$ matrix $\Sigma_T$ is invertible.

\section{Three-Quantiles estimators of GEV parameters }\label{section: 3q}
We introduce a family of asymptotically normal estimators $\hat{\theta}$ for the parameter vector $\theta = (\xi, \mu, \sigma)$ of a GEV distribution $G = G_{\theta}$. Each such $\hat{\theta}$ will be easily computable in terms of three empirical quantiles. We deliberately restrict the proofs and detailed presentation to the generic situation $\xi \neq 0$. The simpler case $\xi = 0$ is analyzed separately whenever necessary.

\subsection{Relations between GEV quantiles and GEV parameters estimators}
\begin{theorem}\label{th:3q}
    Fix any $\theta = [\xi, \mu, \sigma] \in \mathbb{R}^3$ with $\xi \neq 0$ and $\sigma > 0$. For any triplet of percentiles $\boldsymbol{q} = [q_1, q_2, q_3]$ with $0 < q_1 < q_2 < q_3 < 1$, denote $T_j$  the $q_j$-quantile of the GEV $G_{\theta}$. Then $\theta$ is uniquely determined by the vector $\boldsymbol{T} = [T_1, T_2, T_3]$. In fact, one has $\theta = H(\boldsymbol{T})$, where $H$ is a $C^{\infty}$ function of $\boldsymbol{T}$ which can be easily computed as follows:
    
    Denote $a_1 = \log\left(\frac{\log(q_1)}{\log(q_3)}\right)$, $a_2 = \log\left(\frac{\log(q_2)}{\log(q_3)}\right)$, and $b = \frac{T_3 - T_2}{T_3 - T_1}$.
    
    Then $\xi = \Phi(\boldsymbol{T})$ is the unique non-zero solution of $h(x) = 0$, where $h(x)$ is given by:
    
    \begin{equation}
    \label{eq:hxi}
    h(x) = \exp(-x a_2) - b \exp(-x a_1) + b \text{ for all } x \in \mathbb{R}.
    \end{equation}
    
    The location and scale parameters $\mu$ and $\sigma>0$ are then  explicit smooth functions $\mu = L(\boldsymbol{T})$ and $\sigma = S(\boldsymbol{T})$ of $\boldsymbol{T}$.
\end{theorem}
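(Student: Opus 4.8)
The plan is to start from the explicit quantile formula \eqref{eq:quantile}. Writing $c_j = \exp(-\xi\, LL_j)$ with $LL_j = \log(-\log q_j)$, we have $T_j = \mu + \tfrac{\sigma}{\xi}(c_j - 1)$ for $\xi\neq 0$, hence the crucial observation that location and scale drop out of quantile \emph{differences}:
\[
T_3 - T_1 = \frac{\sigma}{\xi}(c_3 - c_1), \qquad T_3 - T_2 = \frac{\sigma}{\xi}(c_3 - c_2).
\]
Therefore $b = \dfrac{T_3 - T_2}{T_3 - T_1} = \dfrac{c_3 - c_2}{c_3 - c_1}$, which depends only on $\xi$. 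Dividing numerator and denominator by $c_3 = \exp(-\xi\, LL_3)$ and using $c_j/c_3 = \exp(-\xi(LL_j - LL_3)) = \exp(-\xi a_j)$ (note $a_j = LL_j - LL_3$ by the definition of $a_j$, since $\log(\log q_j/\log q_3) = \log(-\log q_j) - \log(-\log q_3)$, the sign flips cancelling), one gets $b = \dfrac{1 - \exp(-\xi a_2)}{1 - \exp(-\xi a_1)}$, which rearranges exactly to $h(\xi) = 0$ with $h$ as in \eqref{eq:hxi}. So the first step is this algebraic derivation, confirming that the true $\xi$ is \emph{a} root of $h$.

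The substantive step is to show $h$ has a \emph{unique} non-zero root, so that $\xi$ is genuinely determined by $\boldsymbol{T}$ via $\Phi$. Note $h(0) = 1 - b + b = 1 \neq 0$ wait --- recompute: $h(0) = \exp(0) - b\exp(0) + b = 1 - b + b = 1$; hmm that is nonzero, so $0$ is not a root, good, but we must locate the actual root. Actually reconsider: as $x\to 0$, $h(x)\to 1 - b + b = 1$? That cannot be right if the true $\xi\to 0$ limit should be a root. Let me instead note that the correct normalization makes $h(x) = e^{-xa_2} - b e^{-xa_1} + b$ and the true relation is $b(1 - e^{-\xi a_1}) = 1 - e^{-\xi a_2}$, i.e. $b - b e^{-\xi a_1} = 1 - e^{-\xi a_2}$, i.e. $e^{-\xi a_2} - b e^{-\xi a_1} + b - 1 = 0$. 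This matches \eqref{eq:hxi} only if one reads $h(x)$ as $e^{-xa_2} - b e^{-xa_1} + b - 1$; I will recheck the paper's constant but in any case the plan for uniqueness is the same: since $0 < q_1 < q_2 < q_3 < 1$ we have $\log q_1 < \log q_2 < \log q_3 < 0$, so $\log q_j/\log q_3 > 0$ and $a_1 > a_2 > 0$ (as $\log q_1/\log q_3 > \log q_2/\log q_3 > 1$). One shows the relevant root function is strictly monotone: differentiate, factor out a positive term, and reduce to showing a single-signed expression using $a_1 > a_2 > 0$; alternatively, study $\psi(x) = \dfrac{1 - e^{-x a_2}}{1 - e^{-x a_1}}$ and prove $\psi$ is strictly monotone on $(-\infty,0)\cup(0,\infty)$ with a removable singularity at $0$ where $\psi(0^{\pm}) = a_2/a_1$, then argue the horizontal line $y = b$ meets the graph of $\psi$ exactly once off $x = 0$. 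The monotonicity of $\psi$ is the main obstacle; I expect it follows from showing $\log\psi$ has negative derivative, i.e. $\dfrac{a_2 e^{-x a_2}}{1 - e^{-x a_2}} < \dfrac{a_1 e^{-x a_1}}{1 - e^{-x a_1}}$, which reduces to monotonicity of $t \mapsto \dfrac{t e^{-t}}{1 - e^{-t}} = \dfrac{t}{e^t - 1}$ composed with the scalings --- a classical decreasing function on $\mathbb{R}\setminus\{0\}$.

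Once $\xi = \Phi(\boldsymbol{T})$ is pinned down, recovering $\sigma$ and $\mu$ is immediate and explicit: from $T_3 - T_1 = \tfrac{\sigma}{\xi}(c_3 - c_1) = \tfrac{\sigma}{\xi}\,c_3\,(1 - e^{-\xi a_1})$ one solves
\[
\sigma = S(\boldsymbol{T}) = \frac{\xi\,(T_3 - T_1)}{c_3(1 - e^{-\xi a_1})} = \frac{\xi\,(T_3 - T_1)}{e^{-\xi LL_3} - e^{-\xi LL_1}},
\]
and then $\mu = L(\boldsymbol{T}) = T_3 - \tfrac{\sigma}{\xi}(e^{-\xi LL_3} - 1)$, with $\xi$ everywhere standing for $\Phi(\boldsymbol{T})$. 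Finally I must confirm smoothness: $\Phi$ is $C^{\infty}$ by the implicit function theorem applied to $h(x,\boldsymbol T) = 0$, provided $\partial h/\partial x \neq 0$ at the root --- but that is exactly the strict monotonicity established in the uniqueness step, so it comes for free. The formulas for $S$ and $L$ are then compositions of $\Phi$ with elementary analytic functions whose denominators are nonvanishing (since $a_1 \neq 0$ and $\xi \neq 0$ force $e^{-\xi LL_3} \neq e^{-\xi LL_1}$), giving $H = (\Phi, L, S) \in C^{\infty}$. The only delicate point to flag is continuity/uniqueness of the construction across the regime $\xi \to 0$; since the theorem restricts to $\xi \neq 0$ this is not needed here, but I would remark that the $\xi = 0$ case (handled separately in the paper) is the removable-singularity limit $b = a_2/a_1$.
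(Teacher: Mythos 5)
Your proposal is essentially the paper's own argument: eliminate $(\mu,\sigma)$ from the quantile formula to get a scalar equation in $\xi$ alone, prove the non-zero root is unique by a monotonicity analysis, then recover $\sigma$ and $\mu$ linearly and invoke the inverse/implicit function theorem for smoothness (the paper eliminates $(\mu,\sigma)$ via a vanishing determinant and shows $h$ itself is unimodal with $h(-\infty)=-\infty$, $h(+\infty)=b-1<0$; you work with the equivalent ratio $\psi(x)=\frac{1-e^{-xa_2}}{1-e^{-xa_1}}$, which amounts to the same thing). Your suspicion about the constant is right: the displayed \eqref{eq:hxi} is missing a $-1$; the paper's proof uses $h(x)=e^{-xa_2}-be^{-xa_1}-1+b$, for which $h(0)=0$ and the statement ``unique non-zero root'' makes sense. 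One slip to fix: $\psi$ is strictly \emph{increasing}, not decreasing. Indeed $\frac{d}{dx}\log\psi=\frac{1}{x}\left[\rho(xa_2)-\rho(xa_1)\right]$ with $\rho(u)=u/(e^u-1)$ strictly decreasing, and checking both signs of $x$ gives a positive derivative; the inequality you propose to prove, $\frac{a_2e^{-xa_2}}{1-e^{-xa_2}}<\frac{a_1e^{-xa_1}}{1-e^{-xa_1}}$, is false (e.g.\ $a_1=2$, $a_2=1$, $x=1$). Since strict monotonicity in either direction yields uniqueness of the solution of $\psi(x)=b$ off $x=0$ and the nonvanishing of $\partial_x h$ at the root, the argument goes through once the sign is corrected.
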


\begin{proof}
For $j = 1, 2, 3$, denote
\begin{equation} \label{eq:LLjQj}
LL_j = \log(-\log(q_j)), \quad Q_j = \frac{1}{\xi}\left(\exp(-\xi LL_j) - 1\right)
\end{equation}
so that $LL_1 > LL_2 > LL_3$ and $Q_1 < Q_2 < Q_3$. Then equation \eqref{eq:quantile} directly shows that $(\mu, \sigma)$ verify the linear system
\begin{equation}
T_j - \mu - Q_j \sigma = 0 \quad \text{for} \; j = 1, 2, 3
\label{eq:nonlinearsys}
\end{equation}
This system of three linear equations is verified by the non-zero vector $[1, \mu, \sigma]$. Hence, the $3\times3$ matrix $K$ of system coefficients must verify $\det(K) = 0$, which yields
\begin{equation}
(T_2 - T_3)Q_1 + (T_3 - T_1)Q_2 + (T_1 - T_2)Q_3 = 0
\end{equation}
Replace $Q_j$ by the formula given by \eqref{eq:LLjQj} to get the following identity, valid for all $\xi$,
\begin{equation}\label{eq:triplet}
(T_2 - T_3) \exp(-\xi LL_1) + (T_3 - T_1) \exp(-\xi LL_2) + (T_1 - T_2) \exp(-\xi LL_3) = 0
\end{equation}

Let $a_1 = LL_1 - LL_3$, $a_2 = LL_2 - LL_3$, $b = (T_3 - T_2)/(T_3 - T_1)$, so that $0 < B < 1$ and $a_1 > a_2 > 0$. Then \eqref{eq:triplet} is equivalent to $h(\xi) = 0$, where
\[
h(x) = \exp(-x a_2) - b \exp(-x a_1) - 1 + b = 0
\]

For $\boldsymbol{q}$ fixed, $h(x)$ is a $C^{\infty}$ function of $x$ and $b$, and has derivative
\begin{equation}\label{equation: dh}
    h'(x) = a_1 b  \exp(-x a_2)\left[\exp(-x (a_1 - a_2)) - \frac{a_2}{a_1 b}\right]
\end{equation}

Since $h'(x)$ has the same sign as $(x - s)$ with $s = \frac{\log(a_1 b/a_2)}{a_1 - a_2}$, $h(x)$ is increasing on $(-\infty, s)$ and decreasing on $(s, +\infty)$, with $h(-\infty) = -\infty$ and $h(+\infty) = b - 1 < 0$. Thus, the equation $h(x) = 0$ has exactly two solutions: namely $x = 0$ and a unique non-zero solution $\xi = \psi(b) \neq s$. If $a_1 b > a_2$, then $s > 0$, and $\xi > s > 0$. If $a_1 b < a_2$, then $s < 0$, and $\xi < s < 0$.

The only root of $h'(x) = 0$ is $x = s$, and $\xi \neq s$, hence $h'(\xi)$ must be non-zero. So the Inverse Function theorem (see \cite{hamilton1982inverse}) applies to the $C^{\infty}$ function $h(x)$ to prove that $\xi = \psi(b)$ is a $C^{\infty}$ function of $b$, and a fortiori $\xi = \Phi(\boldsymbol{T})$ where $\Phi$ is a $C^{\infty}$ function of $\boldsymbol{T}$.

For $\xi = \Phi(\boldsymbol{T})$, the three equations of the system \eqref{eq:nonlinearsys} are linearly dependent and uniquely determine $\sigma,\mu$, by the formulas:
\begin{equation*}
    \sigma = \frac{T_2 - T_1}{Q_2 - Q_1}; \quad \mu = \frac{T_1Q_2 - Q_1T_2}{Q_2 - Q_1}
\end{equation*}

Since $Q_1,Q_2$ are explicit smooth functions of $\xi=\Phi(\boldsymbol{T})$, these expressions are clearly $C^{\infty}$ functions of $\boldsymbol{T}$, which we denote $\mu=L(\boldsymbol{T})$, and $\sigma=S(\boldsymbol{T})$.  This concludes the proof.
\end{proof}

\subsection{Numerical computation of the function $H=[\Phi,L,S]$}\label{Hcomputation}

Fix $\boldsymbol{q}=[q_1,q_2,q_3]$ with $0 < q_1 < q_2 < q_3 < 1$. For any vector $\boldsymbol{T}=[T_1,T_2,T_3]$ in $\mathbb{R}^3$ such that $T_1 < T_2 < T_3$, the proof of  the preceding theorem indicates  a fast numerical computation of the smooth functions $H(\boldsymbol{T})=\left[\Phi(\boldsymbol{T}),L(\boldsymbol{T}), S(\boldsymbol{T})\right]$. Indeed, $\boldsymbol{q}$ and $\boldsymbol{T}$ define the coefficients $a_1,a_2,b$ of the equation $h(x) = 0$.  The key numerical first step is to solve $h(x)=0$ for $x$ in an explicit half-line $J$ disjoint from $0$, with $h(x)$ strictly monotonous over $J$. The Newton-Raphson algorithm yields the solution $\xi= \Phi(\boldsymbol{T})$. This yields the values  of $Q_1,Q_2$, and then $L(\boldsymbol{T}), S(\boldsymbol{T})$ are given by $(T_2 - T_1)/(Q_2 - Q_1)$ and $(T_1Q_2 - Q_1T_2)/(Q_2 - Q_1)$. 

\subsection{Three-Quantiles estimators of GEV parameters}
Since the triplet of quantiles $\boldsymbol{T}$ can be  estimated from observations, theorem \ref{th:3q} leads us to  define the following natural class of estimators for $\theta$. 

\begin{definition}[Three-Quantiles estimator] \label{def:3quantiles}
Fix any 3 percentiles $\boldsymbol{q}=[q_1, q_2,q_3]$ with $0<q_1<q_2<q_3<1$. Let $Y_1,\ldots,Y_N$ be $N$ i.i.d. observations sampled from a GEV distribution $G_{\theta}$, with unknown $\theta=[\xi,\mu,\sigma]$. Let $\hat{T}_j$ be the empirical $q_j$-quantile of $Y_1,\dots,Y_N$, and let $\hat{\boldsymbol{T}}=[\hat{T}_1,\hat{T}_2,\hat{T}_3]$. Define the \textit{three-quantiles estimator} $\hat{\theta}_N(\boldsymbol{q})$ of the unknown $\theta$ by 
\begin{equation}
\hat{\theta}_N(\boldsymbol{q}) = H(\boldsymbol{\hat{T}})
\end{equation}
where the smooth function $H(\boldsymbol{T})$ is computed as indicated in Section \ref{Hcomputation}.
\end{definition}

\begin{theorem}[Asymptotic Normality of Three-Quantiles estimators]  \label{th:3q normality}
Fix any vector $\theta=[\xi,\mu,\sigma]$ of GEV parameters, with $\xi\neq0$ and $\sigma>0$. Fix any 3 percentiles $\boldsymbol{q}=[q_1, q_2,q_3]$ with $0<q_1<q_2<q_3<1$. Let $\hat{\theta}_N(\boldsymbol{q})$ be the \textit{Three-Quantile estimator} of the unknown $\theta$ defined above  by $\boldsymbol{q}$ and  $N$  i.i.d.  observations $Y_1,\ldots,Y_N$ sampled from $G_{\theta}$. Then $\hat{\theta}_N(\boldsymbol{q})$ is an asymptotically normal and consistent estimator of $\theta$ as $N\to \infty$. The asymptotic covariance matrix $\Gamma$ of $\sqrt{N}(\hat{\theta}_N(\boldsymbol{q}) - \theta)$ is easily computable as outlined in the proof below, where we give detailed formulas for the asymptotic variances $avar(\xi), avar(\sigma), avar(\mu)$ of  $\sqrt{N}(\hat{\xi}_N(\boldsymbol{q}) - \xi)$,  $\sqrt{N}(\hat{\sigma}_N(\boldsymbol{q}) - \sigma)$,  $\sqrt{N}(\hat{\mu}_N(\boldsymbol{q}) - \mu)$.  These  explicit formulas compute the diagonal terms  of  $\Sigma_{\theta}$ which are  essential  to derive confidence intervals for our three-quantiles estimators $\hat\xi$,  $\hat\sigma$, $\hat\mu$  of $\xi , \sigma, \mu$ .

\end{theorem}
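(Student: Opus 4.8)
The plan is to combine the asymptotic normality of empirical quantiles (already established, with covariance matrix $\Sigma_T$ given by \eqref{eq:Sigma.k}) with the smoothness of the map $H$ from Theorem \ref{th:3q}, via the multivariate Delta Method. First I would recall that $\sqrt{N}(\hat{\boldsymbol{T}} - \boldsymbol{T})$ converges in distribution to a centered Gaussian $\mathcal{N}(0, \Sigma_T)$, where $\Sigma_T$ is the explicit $3\times 3$ matrix of \eqref{eq:Sigma.k} evaluated at the true $\theta$ and the chosen percentiles $\boldsymbol{q}$. Since $T_1 < T_2 < T_3$ and $H = [\Phi, L, S]$ is $C^\infty$ on a neighborhood of $\boldsymbol{T}$ in the open region $\{T_1 < T_2 < T_3\}$ (Theorem \ref{th:3q}), and since $\hat{\boldsymbol{T}}$ lands in this region with probability tending to one, the Delta Method applies directly: $\sqrt{N}(H(\hat{\boldsymbol{T}}) - H(\boldsymbol{T})) = \sqrt{N}(\hat{\theta}_N(\boldsymbol{q}) - \theta)$ converges in distribution to $\mathcal{N}(0, \Gamma)$ with
\begin{equation*}
\Gamma = J_H(\boldsymbol{T})\, \Sigma_T\, J_H(\boldsymbol{T})^{\top},
\end{equation*}
where $J_H(\boldsymbol{T})$ is the $3\times 3$ Jacobian matrix of $H$ at the true quantile vector. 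Consistency follows either from the continuous mapping theorem applied to $H$ (continuous) and $\hat{\boldsymbol{T}} \to \boldsymbol{T}$ in probability, or simply as a byproduct of asymptotic normality with vanishing scaling.

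The remaining work is to make the Jacobian $J_H(\boldsymbol{T})$ explicit, since $\Sigma_T$ is already explicit. I would organize this by the chain structure used in the proof of Theorem \ref{th:3q}: $\xi = \Phi(\boldsymbol{T}) = \psi(b)$ with $b = (T_3 - T_2)/(T_3 - T_1)$, then $\sigma = S(\boldsymbol{T})$ and $\mu = L(\boldsymbol{T})$ expressed through $\xi$ and the $Q_j$. For the $\xi$-row: implicit differentiation of $h(\xi, b) = 0$ gives $\partial \xi/\partial b = -(\partial h/\partial b)/(\partial h/\partial x)|_{x=\xi}$, where $\partial h/\partial x = h'(\xi)$ is the nonzero quantity \eqref{equation: dh} and $\partial h/\partial b = b - \exp(-\xi a_1) $... wait, $\partial h/\partial b = -\exp(-x a_1) + 1$ evaluated at $x = \xi$; then one chains with the elementary partials of $b$ with respect to $T_1, T_2, T_3$. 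For the $\sigma$- and $\mu$-rows one differentiates $\sigma = (T_2 - T_1)/(Q_2 - Q_1)$ and $\mu = (T_1 Q_2 - Q_1 T_2)/(Q_2 - Q_1)$, remembering that $Q_1, Q_2$ depend on $\boldsymbol{T}$ only through $\xi$, so $\partial Q_j/\partial T_i = (dQ_j/d\xi)\,(\partial \xi/\partial T_i)$ with $dQ_j/d\xi = -\xi^{-2}(\exp(-\xi LL_j) - 1) - \xi^{-1} LL_j \exp(-\xi LL_j)$. Assembling these gives all nine entries of $J_H$, and then the diagonal entries of $\Gamma$ yield the stated formulas for $avar(\xi)$, $avar(\sigma)$, $avar(\mu)$.

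I do not expect a genuine conceptual obstacle here; the theorem is essentially a Delta-Method corollary of the two preceding theorems. The one point requiring a little care is the justification that the Delta Method is legitimate even though $H$ is only defined (and smooth) on the open cone $\{T_1 < T_2 < T_3\}$ rather than on all of $\mathbb{R}^3$: since the true $\boldsymbol{T}$ lies strictly inside this open set, $H$ is $C^\infty$ on a full neighborhood of $\boldsymbol{T}$, so the standard statement of the Delta Method (e.g. Theorem 3.1 in \cite{vandervaart1998}) applies verbatim, and the event $\{\hat{\boldsymbol{T}} \notin \{T_1<T_2<T_3\}\}$ has probability $o(1)$ and so does not affect the limiting distribution. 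The only other ``hard part'' is purely bookkeeping: the explicit Jacobian entries, especially the $\xi$-row obtained by implicit differentiation, produce somewhat lengthy expressions, but each step is a routine application of the chain rule and the implicit function theorem already invoked in the proof of Theorem \ref{th:3q}. I would therefore present the proof as: (i) invoke quantile asymptotic normality to get $\Sigma_T$; (ii) invoke smoothness of $H$ and apply the Delta Method to obtain $\Gamma = J_H \Sigma_T J_H^{\top}$ and consistency; (iii) compute $J_H(\boldsymbol{T})$ blockwise via implicit differentiation of $h$ and direct differentiation of the formulas for $\sigma, \mu$; (iv) read off the diagonal of $\Gamma$ to record $avar(\xi), avar(\sigma), avar(\mu)$.
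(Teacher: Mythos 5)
Your proposal is correct and follows essentially the same route as the paper: asymptotic normality of the empirical quantile vector with covariance $\Sigma_T$, followed by the Delta Method applied to the smooth map $H=[\Phi,L,S]$, with the $\xi$-row of the Jacobian obtained by implicit differentiation of $h$ and the $\sigma,\mu$-rows by the chain rule through $\xi$ and the $Q_j$. The only difference is organizational — the paper introduces the intermediate vector $\boldsymbol{Q}$ and the cross-covariance $Cov_{T,Q}$ explicitly rather than assembling a single Jacobian $J_H$, but the computation is the same.
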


\begin{proof}
As $N\to \infty$, the vector $\boldsymbol{\hat{T}}_N$ of 3 empirical quantiles extracted from the observed i.i.d sample $Y_1,\dots,Y_N$  is an asymptotically normal and consistent estimator for the true quantiles vector $\boldsymbol{T}$  of $G_{\theta}$. Moreover the asymptotic covariance matrix  $\Sigma_T$ of $\sqrt{N}(\boldsymbol{\hat{T}}_N - \boldsymbol{T})$ is explicitly given by formula \eqref{eq:Sigma.k}, which we recall here  
\begin{equation}\label{eq:Sigma3}
\Sigma_T(i,j)=\sigma^2 \frac{\max(q_i,q_j)-q_iq_j}{q_iq_j(\log(q_i)\log(q_j))^{1+\xi}} \;\;
\text{for all}\; i,j = 1\dots 3
\end{equation}

Since $\hat{\theta}_N(\boldsymbol{q})$ is a $C^\infty$ smooth function  $H(\boldsymbol{\hat{T}}_N)$ of $\boldsymbol{\hat{T}}_N$, the classical Cramer-Wold theorem in \cite{cramer1946} implies that $\hat{\theta}_N(\boldsymbol{q})$ must also be an asymptotically normal and consistent estimator of $H(\boldsymbol{T})=\theta$, with asymptotic covariance matrix $\Gamma = D\Sigma_T D^*$, where the $3\times 3$ matrix  $D$ is the differential  $D=\partial_{\boldsymbol{T}} H(\boldsymbol{T})$ of the smooth function $\boldsymbol{T}\to H(\boldsymbol{T})$ with respect to $\boldsymbol{T}$.

For large N, denote all normalized errors of estimation by 
$$\Delta \boldsymbol{T} = \sqrt{N}(\hat{\boldsymbol{T}}-\boldsymbol{T}) \;\;;\;
\Delta \xi = \sqrt{N}(\hat{\xi}- \xi) \; ;  \; \Delta \sigma = \sqrt{N}(\hat{\sigma}- \sigma) \;\; ; \; \Delta \mu = \sqrt{N}(\hat{\mu}- \mu) 
$$
As just proved the random vector $ [ \Delta \xi  , \;  \Delta \sigma \ ; , \Delta \mu] $ is asymptotically Gaussian when $N$ tends to $\infty$. We now outline the computation of its asymptotic covariance matrix $\Gamma = D\Sigma_T D^*$.  Fix three percentiles  $\boldsymbol{q}= \{ 0 < q_1 < q_2 < q_3 < 1] $. Let $T_j$ be the true $q_j$-quantile of $G_{\theta}$ and as in theorem \eqref{th:3q}, let  $a_1=LL1-LL3$, $a_2= LL2-LL3$, $b = (T_3-T_2)/(T_3-T_1)$. Recall that the true $\xi$ is the unique non-zero solution $Phi(\boldsymbol{T(q)})$ of equation $h(x) =0$ (see \eqref{eq:hxi}), which we now rewrite
\begin{equation*}
    h(x) = \kappa(x,\boldsymbol{T(q)}) = \exp(-x a_2) - b \exp(-x a_1) - 1 + b
\end{equation*}
The identity $\kappa(\Phi(\boldsymbol{T(q)}),\boldsymbol{T(q)})=0$  implies
  \begin{equation}\label{eq:gradPhi}
\partial_{T_j}\Phi=-\partial_{T_j}\kappa/\partial_x\kappa
\end{equation}
This  yields the  explicit expression 
\begin{equation} \label{eq:AVAR}
avar(\xi) = W(\boldsymbol{q}) \Sigma_T W(\boldsymbol{q})^* 
\end{equation}
where $\Sigma_T$ is given by \eqref{eq:Sigma3}, and the gradient vector $W(\boldsymbol{q}) =  \partial_{\boldsymbol{T}}\Phi$  is given by 
\begin{equation}\label{eq:W(q)} 
    W(\boldsymbol{q})= \alpha V 
\end{equation}
with $\alpha$ and $V$  computed by
\begin{align*}
\alpha = \frac{\exp(-\xi  a_1)-1}{-a_2\exp(-\xi a_2) + b a_1 \exp(-\xi a_1)} \;\; and \;\; V= \frac{1}{(T_3-T_1)^2}\left[T_3-T_2,\; T_1-T_3,\;T_2-T_1\right] 
\end{align*}
We have defined above the column vector $\boldsymbol{Q}= [Q_1, Q_2, Q_3]^*$ where $Q_j = \frac{\exp(-\xi  LL_j) - 1}{\xi }$.This directly yields the vector $\partial_{\xi} \boldsymbol{Q} $ :
$$
\partial_{\xi} Q_i = \frac{1 -(1+\xi LL_i)\exp(-\xi LL_i) }{\xi ^2}\\
$$
Replacing true quantiles by empirical quantiles  in the expression of $\boldsymbol{Q}$ yields a consistent estimator $\hat{\boldsymbol{Q}}$ of $\boldsymbol{Q} $.\\
Then  $\Delta_Q = \sqrt{N}(\hat{\boldsymbol{Q}}-\boldsymbol{Q})$ converges in distribution to a gaussian $N(0,\Sigma_Q)$. Again by Cramer-Wold theorem we must have
\begin{equation}
\Sigma_Q = avar(\xi) \;  \partial_{\xi}\boldsymbol{Q} \; \partial_{\xi}\boldsymbol{Q} ^* 
\end{equation}

Our three-quantiles estimators $\hat\sigma$ and $\hat\mu$ of   $\sigma$ and $\mu$ are given by
\begin{equation*}
    \hat \sigma = S(\boldsymbol{\hat T}, \boldsymbol{\hat Q}) =  \frac{\hat T_2 - \hat T_1}{\hat Q_2 - \hat Q_1}; \quad \hat \mu = L(\boldsymbol{\hat T},\boldsymbol{\hat Q} ) = \frac{\hat T_1\hat Q_2 - \hat Q_1\hat T_2}{\hat Q_2 - \hat Q_1}
\end{equation*}

Both $S(\boldsymbol{T} , \boldsymbol{Q} )$ and $L(\boldsymbol{T} , \boldsymbol{Q} )$ are smooth functions of $(\boldsymbol{T} , \boldsymbol{Q})$, and  their partial differentials are directly computed as
\begin{align*}
    &\partial_{\boldsymbol{T}} S = \frac{1}{(Q_1-Q_2)} \left[ 1, -1, 0 \right] \quad \partial_{\boldsymbol{Q}} S = \frac{T_1-T_2}{(Q_1-Q_2)^2} \left[ -1, 1, 0 \right] \nonumber\\
    &\partial_{\boldsymbol{T}} L =\frac{1}{(Q_1-Q_2)} \left[ Q_2, - Q_1, 0 \right] \quad \partial_{\boldsymbol{Q}} L = \frac{T_1-T_2}{(Q_1-Q_2)^2} \left[ Q_2, Q_1, 0\right] \nonumber
\end{align*}

We still need to compute the $3 \times 3$  covariance matrix 
$$
Cov_{T,Q} = \lim_{N \to \infty} E(\Delta \boldsymbol{T} \Delta \boldsymbol{Q}^*)
$$
Recall that $\xi = \Phi(\boldsymbol{T})$ and that the line vector $W = \partial_{\boldsymbol{T}} \Phi(\boldsymbol{T}) $ has already been computed above. As is classically  known from Cramer-Wold theorem already used above , in all the following 1st order Taylor approximations of error estimates ,  the 2nd order remainder  terms are irrelevant to compute the gaussian limits of their distributions
Then up to these 2nd order terms we have the 1st order Taylor approximations
\begin{align*}
    &\Delta \xi \approx W \Delta \boldsymbol{T} = W_1 \Delta T_1 + W_2 \Delta T_2 + W_3 \Delta T_3 \nonumber\\
    &\Delta \boldsymbol{Q} \approx \left[ \partial_{\xi}Q_1 , \partial_{\xi}Q_2 ,\partial_{\xi}Q_3 \right] \Delta \xi \nonumber\\
    &\Delta \sigma \approx \partial_{\boldsymbol{T}} S \Delta \boldsymbol{T} + \partial_{\boldsymbol{Q}} S\Delta \boldsymbol{Q} \nonumber\\
    &\Delta \mu \approx \partial_{\boldsymbol{T}} L \Delta \boldsymbol{T} + \partial_{\boldsymbol{Q}} L\Delta \boldsymbol{Q}
     \nonumber
\end{align*}
These relations will now be used to compute the following asymptotic covariances and variances
$$
Cov_{T,Q}= \lim_{N \to \infty} E(\Delta \boldsymbol{T} \Delta \boldsymbol{Q}^*) \\
 avar(\sigma) = \lim_{N \to \infty} var(\Delta\sigma) \; ; \;  avar(\mu) = \lim_{N \to\infty} var(\Delta\mu)
$$
We first compute the coefficients of $A=Cov_{T,Q} =E(\Delta\boldsymbol{T} \Delta \boldsymbol{Q}^*)$. 
$$
A_{i,j} = E(\Delta T_i \Delta Q_j) \approx E(\Delta T_i \partial_{\xi}Q_j \Delta \xi) \approx  
E(\Delta T_i \partial_{\xi}Q_j (W_1 \Delta T_1 + W_2 \Delta T_2 + W_3 \Delta T_3) )
$$
where all approximations derive from the Cramer-Wold theorem. Hence
$$
A_{i,j} \approx \partial_{\xi}Q_j \sum_{k=1, 2, 3} W_k  E(\Delta T_i \Delta T_k ) ) \approx \partial_{\xi}Q_j \sum_{k=1, 2, 3} W_k  \Sigma_T(k,i ) = (W \Sigma_T)_i \partial_{\xi}Q_j 
$$
This yields the  explicit formula 
\begin{equation} \label{covTQ}
Cov_{T,Q}= \partial_{\xi} Q ^* W \Sigma_T 
\end{equation}
Applying again the 1st order approximations listed above yields explicit asymptotic variances for $\hat\sigma$   and $\hat\mu$, namely
\begin{equation} \label{avarsig}
avar(\sigma) = \; \partial_{\boldsymbol{T}}S \; \Sigma_T \; \partial_{\boldsymbol{T}}S ^* + \partial_{\boldsymbol{Q}}S \; \Sigma_Q \; \partial_{\boldsymbol{Q}}S ^*
   + 2 \partial_{\boldsymbol{T}}S \; Cov_{T,Q} \; \partial_{\boldsymbol{Q}}S ^* 
\end{equation}

\begin{equation} \label{avarmu}
avar(\mu) = \; \partial_{\boldsymbol{T}}L \; \Sigma_T \;  \partial_{\boldsymbol{T}}L ^* + \partial_{\boldsymbol{Q}}L \; \Sigma_Q \;  \partial_{\boldsymbol{Q}}L ^*
   + 2 \partial_{\boldsymbol{T}}L \; Cov_{T,Q} \; \partial_{\boldsymbol{Q}}L ^*
\end{equation}

Similar computations also yield explicit expressions $\Sigma_{\theta}$ for the full asymptotic covariance matrix of our estimator $\hat\theta$.
\end{proof}

\begin{corollary}\label{q.opt}
    For an i.i.d. sample $Y_1, \ldots, Y_N$ with CDF $G_\theta$, the three-quantiles estimator $\hat{\xi}_N(\boldsymbol{q})$ of $\xi$ has an asymptotic variance $avar(\xi)$ computed by formula \eqref{eq:AVAR}, which  is a smooth function $AVAR$ of $\boldsymbol{q}$ and $\theta = [\xi, \mu, \sigma]$. This function  has the following natural invariance property:
\begin{equation} \label{invariance}
AVAR((\boldsymbol{q},\xi, \mu, \sigma)) = AVAR(\boldsymbol{q}, \xi, 0, 1)
\end{equation}

For each shape parameter $\xi$, one can then numerically find the optimal choice $0 < q_1(\xi) < q_2(\xi) < q_3(\xi) < 1$ of three percentiles which minimize the asymptotic variance $AVAR(\boldsymbol{q}, \xi, 0, 1)$. The same optimal choice $\boldsymbol{q}(\xi)$ will then also minimize $AVAR(\boldsymbol{q}, \xi, \mu, \sigma)$ for any given pair $\mu, \sigma$.
We have for instance computed and  displayed the optimal choice of three percentiles $\boldsymbol{q}(\xi)$ for $-5 \leq \xi \leq 5$, (see Figure \ref{figure: opt} in Appendix \ref{section: opt3q}).
\end{corollary}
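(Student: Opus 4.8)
The plan is to split Corollary~\ref{q.opt} into three essentially independent pieces: (a) $AVAR$ is a smooth function of $(\boldsymbol q,\theta)$; (b) the affine invariance \eqref{invariance}; (c) the resulting reduction of the optimal‑percentile problem to the normalized case $(\xi,0,1)$. Parts (a) and (c) are bookkeeping; the content is (b), and the cleanest way to get it is to observe that the whole construction of $\hat\xi_N$ is affine‑equivariant in the data.

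For (a) I would simply track where each ingredient of \eqref{eq:AVAR}--\eqref{eq:W(q)} is smooth. Fix $\boldsymbol q$ with $0<q_1<q_2<q_3<1$: the entries of $\Sigma_T$ in \eqref{eq:Sigma3} are visibly $C^{\infty}$ in $(\xi,\sigma)$; the numbers $a_1,a_2$ depend only on $\boldsymbol q$; the quantiles $T_1,T_2,T_3$ are $C^{\infty}$ in $(\xi,\mu,\sigma)$ by \eqref{eq:quantile} with $T_3-T_1>0$, so $b$ and $V$ are $C^{\infty}$; and the denominator of $\alpha$ is exactly $h'(\xi)$, which is nonzero by the proof of Theorem~\ref{th:3q}, so $\alpha$ is $C^{\infty}$ as well. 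Hence $avar(\xi)=W(\boldsymbol q)\,\Sigma_T\,W(\boldsymbol q)^{*}$ is $C^{\infty}$ in $(\xi,\mu,\sigma)$ on $\{\xi\neq0,\ \sigma>0\}$, and letting $\boldsymbol q$ vary adds only the explicit smooth dependence on $q_1,q_2,q_3\in(0,1)$. This defines the function $AVAR$.

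For (b) I would argue as follows. If $Y\sim G_{\xi,0,1}$ then $\sigma Y+\mu\sim G_{\xi,\mu,\sigma}$, so true quantiles satisfy $T_j(\xi,\mu,\sigma)=\sigma\,T_j(\xi,0,1)+\mu$, and under the obvious coupling the empirical quantiles transform the same way. Therefore the ratio $b=(T_3-T_2)/(T_3-T_1)$ — and its empirical analogue — is unchanged, because $\mu$ cancels in the differences and $\sigma$ cancels in the ratio; since $a_1,a_2$ depend only on $\boldsymbol q$, the function $h$ of \eqref{eq:hxi} and its unique non‑zero root $\hat\xi_N$ are literally the same for $G_{\xi,\mu,\sigma}$‑data and $G_{\xi,0,1}$‑data, so $\sqrt N(\hat\xi_N-\xi)$ has the same limiting law in both cases, which is \eqref{invariance}. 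Equivalently, one may read the identity straight off the formulas: substituting \eqref{eq:quantile} shows $T_3-T_2,\ T_1-T_3,\ T_2-T_1$ each carry a factor $\sigma/\xi$ and $(T_3-T_1)^2$ a factor $\sigma^2/\xi^2$, so $V=\sigma^{-1}\widetilde V(\xi,\boldsymbol q)$; $b$ and hence $\alpha$ depend only on $(\xi,\boldsymbol q)$; and $\Sigma_T=\sigma^{2}\,\widetilde\Sigma(\xi,\boldsymbol q)$ by \eqref{eq:Sigma3}; the two factors $\sigma^{-1}$ in $W\Sigma_TW^{*}$ cancel the $\sigma^{2}$, leaving a function of $(\xi,\boldsymbol q)$ alone.

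For (c), fix $\xi$: by (a) the map $\boldsymbol q\mapsto AVAR(\boldsymbol q,\xi,0,1)$ is smooth on the open simplex $\{0<q_1<q_2<q_3<1\}$, so minimizing it is a well‑posed smooth optimization problem that can be solved numerically; the minimum is attained because $AVAR$ blows up as $\boldsymbol q$ approaches the boundary of the simplex (the $(\log q_i)^{-(1+\xi)}$ factors and the vanishing densities $g_\theta(T_i)$ force this). Then \eqref{invariance} gives $AVAR(\boldsymbol q,\xi,\mu,\sigma)=AVAR(\boldsymbol q,\xi,0,1)$ for every $\mu$ and $\sigma>0$, so both functions have the same minimizer $\boldsymbol q(\xi)$; sweeping $\xi$ over a grid produces Figure~\ref{figure: opt}. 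The one genuinely delicate point is this last attainment/boundary analysis (making the blow‑up at every face of the simplex precise for both signs of $1+\xi$), but it is not needed to run the numerical search and can be stated as a remark; the invariance itself is elementary once one notices that $\hat\xi_N$ sees the data only through the affine‑invariant ratio $b$.
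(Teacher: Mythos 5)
Your proposal is correct and, in its ``read the identity straight off the formulas'' form, is exactly the paper's proof: $b$ and hence $\alpha$ depend only on $(\xi,\boldsymbol q)$, $V$ picks up a factor $\sigma^{-1}$, $\Sigma_T$ a factor $\sigma^{2}$, and these cancel in $W\Sigma_T W^{*}$. The affine-equivariance framing of the estimator and the smoothness/attainment bookkeeping are sensible additions but do not change the route.
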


\begin{proof} 
One has $AVAR = W(\boldsymbol{q})\Sigma_T W(\boldsymbol{q})^*$ due to \eqref{eq:AVAR}. Equation \eqref{eq:W(q)} gives $W(\boldsymbol{q}) = aV$ with explicit formulas for the scalar $a$ and the vector $V$. Denote $S_j$ and $T_j$  the $q_j$-quantile of $G_{[\xi,0,1]}$ and $G_{[\xi,\mu,\sigma]}$. Clearly, $T_j = \sigma (S_j - \mu)$. When one replaces $\mu = 0, \sigma = 1$ with arbitrary $\mu$ and $\sigma > 0$, the term $B = (S_3 - S_2)/(S_3 - S_1) = (T_3 - T_2)/(T_3 - T_1)$ is unchanged, and hence the scalar $a$ is unchanged. The vector $V$ is replaced by $\frac{1}{\sigma}V$. The vector $W(\boldsymbol{q})$ thus becomes $\frac{1}{\sigma}W(\boldsymbol{q})$ so that the matrix $\Sigma_T$ given by \eqref{eq:Sigma3} is replaced by $\sigma^2\Sigma_T$. So $AVAR$ remains unchanged.
\end{proof}
The asymptotic variance of $\hat\xi$ is determined by the choice of triplets of percentiles. We display these results further on (see Section \ref{section: opt3q} ).


\section{Multi-Quantile estimators of GEV parameters}\label{section: mq}
As seen above, GEV parameters can be simultaneously estimated by multiple three-quantiles estimators , associated to any finite set $M$  of $m$ percentiles triplets $\boldsymbol{q^s}$. We now  construct  asymptotically normal consistent estimators of the GEV shape parameter $\xi$  having very  high asymptotic accuracy. Our  approach is to compute optimally  weighted linear combinations of  finite sets of three-quantiles estimators of $\xi$.

\subsection{Definition of Multi-Quantile estimators }\label{def:MultiQ}
 
 Fix any set $M$ of $m$ percentiles triplets $\boldsymbol{q}^s$, where $\boldsymbol{q^s} = [q^s_1 < q^s_2 < q^s_3]$. Given $N$ i.i.d. observations sampled from $G_\theta$, each $\boldsymbol{q}^s$ defines a three-quantiles estimator $\eta^s_N = \hat{\xi}_N(\boldsymbol{q^s})$. Fix any vector $w= [w_1, \ldots, w_m]$ of $m$ positive weights adding up to 1.  The linear combination $X_N = \sum_{s=1 \dots m} w_s\eta^s_N$ will be called the  \emph{Multi-Quantile estimator} of $\xi$ defined by the set $M$  and the vector $\mathbf{w}$.

\begin{theorem}\label{asym var multi} 
    (Notations of definition \ref{def:MultiQ}). Fix any set $M$ of $m$ percentiles triplets $\boldsymbol{q^s}$. This  defines  $m$ three-quantiles estimator $\eta^s_N$ of $\xi$. Fix any vector $\mathbf{w}$ of $m$ positive weights.  The Multi-Quantile estimator $X_N$  defined by $M$ and $\mathbf{w}$  is then an  asymptotically normal and consistent estimator of the shape parameter $\xi$ with  asymptotic variance $\tau^2$ given by 
\begin{equation}\label{eq:tau2}
\tau^2 = \mathbf{w} \Lambda \mathbf{w}^*
\end{equation}
    For all $s,t$, the coefficients   $\Lambda(s, t) = \lim_{N\to\infty} N \text{Cov}(\eta^s_N, \eta^t_N)$ are given by:
\begin{equation} \label{eq:covmulti}
\Lambda(s, t) = W(\boldsymbol{q^s})K(\boldsymbol{q^s}, \boldsymbol{q^t})W(\boldsymbol{q^t})^*
\end{equation}
    where each line vector $W(\boldsymbol{q^s}) \in \mathbb{R}^3$ is given by formula \eqref{eq:W(q)}, and the $3 \times 3$ matrices $K(\boldsymbol{q^s}, \boldsymbol{q^t})$ are explicitly computed by
\begin{equation}\label{eq:Kqr}    
K _{i,j}(\boldsymbol{q}^s,\boldsymbol{q}^t)=
\sigma^2\frac{\max(q^s_i,q^t_j)-q^s_i q^t_j}{q^s_i q^t_j(\log(q^s_i)\log(q^t_j))^{1+\xi}}\; \;  \text{for} \; i,j \in [1,2,3]
\end{equation}
    Indeed the vector $\boldsymbol{\eta}_N = [\eta^1_N, \dots, \eta^m_N]$ is an asymptotically normal estimator of the vector $[\xi, \dots, \xi] \in \mathbb{R}^m$, with asymptotic covariance matrix  $\Lambda$.  Note that
    $\Lambda(s, t)$ depends on $\boldsymbol{q^s}$, $\boldsymbol{q^t}$, and  $\xi$, but does not depend on the location and scale parameters $\mu, \sigma$.
\end{theorem}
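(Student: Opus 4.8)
The plan is to reduce everything to the joint asymptotic normality of the \emph{pooled} vector of empirical quantiles and then propagate it through the smooth maps $\Phi$ by the Cram\'er--Wold / delta method, carrying out simultaneously for all $m$ triplets the argument already used in the proof of Theorem~\ref{th:3q normality} for a single triplet.

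First I would collect all the \emph{distinct} percentiles occurring in $\boldsymbol{q^1},\dots,\boldsymbol{q^m}$ into one vector $\boldsymbol{u}=[u_1,\dots,u_p]$ with $p\le 3m$, let $\boldsymbol{U}$ be the corresponding vector of true $G_\theta$-quantiles and $\hat{\boldsymbol{U}}_N$ the vector of empirical quantiles. Applying the asymptotic normality result recalled in Section~\ref{section: notation} (Cor.~21.5 in \cite{vandervaart1998}) to this pooled set, $\sqrt{N}(\hat{\boldsymbol{U}}_N-\boldsymbol{U})$ is asymptotically Gaussian with mean $0$ and some covariance matrix $\Omega$ whose entries are given by formula \eqref{eq:Sigma.k}; in particular, if $u_i=q^s_a$ and $u_j=q^t_b$, then $\Omega_{i,j}$ equals exactly the $(a,b)$ entry of the matrix $K(\boldsymbol{q^s},\boldsymbol{q^t})$ of \eqref{eq:Kqr}. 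Each triplet vector is a fixed linear selection of coordinates of the pooled vector, $\hat{\boldsymbol{T}}^s=\pi_s(\hat{\boldsymbol{U}}_N)$.

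Next, by Theorem~\ref{th:3q} the map $\boldsymbol{T}\mapsto\Phi(\boldsymbol{T})$ is $C^\infty$ on the relevant open set, so $\boldsymbol{\eta}_N=[\Phi(\pi_1\hat{\boldsymbol{U}}_N),\dots,\Phi(\pi_m\hat{\boldsymbol{U}}_N)]$ is a $C^\infty$ function of $\hat{\boldsymbol{U}}_N$ equal to $[\xi,\dots,\xi]$ at $\boldsymbol{U}$, since $\Phi(\boldsymbol{T}^s)=\xi$ for every $s$. The Cram\'er--Wold / delta method (as in the proof of Theorem~\ref{th:3q normality}) then gives that $\sqrt{N}(\boldsymbol{\eta}_N-[\xi,\dots,\xi])$ is asymptotically Gaussian with covariance $\Lambda=J\,\Omega\,J^{*}$, where $J$ is the $m\times p$ Jacobian whose $s$-th row is $\partial_{\boldsymbol{U}}(\Phi\circ\pi_s)=W(\boldsymbol{q^s})\,D\pi_s$ and $W(\boldsymbol{q^s})=\partial_{\boldsymbol{T}}\Phi$ is the gradient vector of \eqref{eq:W(q)}. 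Expanding $J\Omega J^{*}$ block by block and using the identification of the $\Omega$-blocks with the matrices $K(\boldsymbol{q^s},\boldsymbol{q^t})$ yields $\Lambda(s,t)=W(\boldsymbol{q^s})K(\boldsymbol{q^s},\boldsymbol{q^t})W(\boldsymbol{q^t})^{*}$, which is \eqref{eq:covmulti}. Consistency of each $\eta^s_N$, hence of $X_N=\mathbf{w}\boldsymbol{\eta}_N$, follows from consistency of the empirical quantiles and continuity of $\Phi$; and since $X_N$ is the fixed linear functional $\mathbf{w}$ applied to the asymptotically Gaussian vector $\boldsymbol{\eta}_N$, it is asymptotically normal with asymptotic variance $\tau^2=\mathbf{w}\Lambda\mathbf{w}^{*}$, which is \eqref{eq:tau2}.

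Finally I would record why $\Lambda$ does not depend on $(\mu,\sigma)$: passing between the parameters $(\xi,\mu,\sigma)$ and $(\xi,0,1)$ leaves each ratio $b^s=(T^s_3-T^s_2)/(T^s_3-T^s_1)$ and hence each scalar $\alpha$ in \eqref{eq:W(q)} unchanged, rescales each vector $V$ and hence each $W(\boldsymbol{q^s})$ by a power $\sigma^{\mp1}$, and rescales every entry of $\Omega$ (hence every $K$-block) by $\sigma^{\pm2}$, so the products $W(\boldsymbol{q^s})K(\boldsymbol{q^s},\boldsymbol{q^t})W(\boldsymbol{q^t})^{*}$ are scale-free --- this is exactly the computation of Corollary~\ref{q.opt} applied pairwise. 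The only mildly delicate point is the bookkeeping when different triplets share a percentile: then $\Omega$ is genuinely degenerate, but the pooled asymptotic normality statement and the delta method still apply verbatim, so nothing changes. I expect this pooling and block-matching step to be the only real work; everything else is a direct repetition of the single-triplet argument of Theorem~\ref{th:3q normality}.
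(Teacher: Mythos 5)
Your proposal is correct and follows essentially the same route as the paper: joint asymptotic normality of the concatenated empirical quantiles, the delta method through the smooth maps $\Phi$ with gradients $W(\boldsymbol{q^s})$, block-matching of the covariance with the matrices $K(\boldsymbol{q^s},\boldsymbol{q^t})$, linearity for $\tau^2=\mathbf{w}\Lambda\mathbf{w}^*$, and the scale-invariance argument of Corollary~\ref{q.opt} for the independence from $(\mu,\sigma)$. The only (harmless) difference is cosmetic: you pool the distinct percentiles into one vector and recover each triplet by a coordinate projection, whereas the paper works pairwise with the $6$-dimensional concatenation $[U^s,U^t]$ and then with the full $3m$-dimensional concatenation; your remark about shared percentiles is a small tidiness bonus the paper leaves implicit.
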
 

\begin{proof}
The triplet of percentiles  $\boldsymbol{q}^s$ defines the vector $\boldsymbol{T(q^s)}$ of true quantiles  for  the GEV $G_{\theta}$. Let $U^s = \boldsymbol{\hat{T}(q^s)}_N$ be the empirical quantiles estimator   of $\boldsymbol{T(q^s)}$ computed from  N observations sampled from $G_{\theta}$. Fix any pair $\boldsymbol{q}^s$ and $\boldsymbol{q}^t$.  As seen in section \ref{section: notation},  the vector $Z_N = [U^s, U^t] \in R^6$   is an asymptotically normal estimator of $[  \boldsymbol{T(q^s)}, \boldsymbol{T(q^t)} ]$. The $3 \times 3$ matrices  $Cov(U^s, U^s)$, $Cov(U^t, U^t)$, $Cov(U^s, U^t)$, and $Cov(U^t, U^s)$ give the natural 4 blocks of the covariance matrix $Cov(Z_N)$.  Formula \ref{eq:Sigma.k} then directly  provides the  four blocks decomposition of the limit $K = \lim_{N\to\infty}Cov(Z_N)$, namely
\begin{align} \label{eq:Kblocks}
K=\begin{pmatrix}
    K(\boldsymbol{q}^s, \boldsymbol{q}^s) && K(\boldsymbol{q}^s, \boldsymbol{q}^t) \\
    K(\boldsymbol{q}^t, \boldsymbol{q}^s) && K(\boldsymbol{q}^t, \boldsymbol{q}^t)
\end{pmatrix}
\end{align}
where, $K(\boldsymbol{q}^s, \boldsymbol{q}^t)$ is given by  \eqref{eq:Kqr}.

For short denote $\eta^s =\eta_N^s$ the three-quantiles estimator of $\xi$ defined by $\boldsymbol{q}^s$. As seen above, there is a smooth function $\Phi(U)$ of $U \in \mathbb{R}^3$ such that $\eta^s = \Phi(U^s)$ and $\xi = \Phi(\boldsymbol{T(q^s)})$. 

Thus $\eta^s, \eta^t$ is a smooth function of $Z_N= [U^s,U^t]$, and hence is an asymptotic jointly normal consistent estimator of $[\xi, \xi]$, with asymptotic covariance matrix $\Delta K \Delta^*$, where $\Delta = [\partial_U\Phi(U), \partial_V\Phi(V)]$ is the differential of $\Phi(U), \Phi(V)$ at the point $U = \boldsymbol{T(q^s)}, V = \boldsymbol{T(q^t)}$. Equation \ref{eq:W(q)}, provides  explicit expressions for   $\partial_U\Phi(U) = W(\boldsymbol{q}^s)$ and $\partial_V\Phi(V) = W(\boldsymbol{q}^t)$, and hence for $\Delta = [W(\boldsymbol{q}^s), W(\boldsymbol{q}^t)]$. By block multiplication, $\Delta K \Delta^*$  directly yields the asymptotic covariance $\Lambda(s,t) = W(\boldsymbol{q}^s)K(\boldsymbol{q}^s, \boldsymbol{q}^t)W(\boldsymbol{q}^t)^*$ of $\eta^s$ and $\eta^t$, as announced in equation \ref{eq:covmulti}.
 
Similarly, the vector of three-quantiles estimators  $\boldsymbol{\eta}_N = [ \eta^1, \ldots, \eta^m ]$ is a smooth function $\Psi(W_N)$ of $W_N = [U^1, \ldots, U^m]$, namely$\Psi(W_N) =[\Phi(U^1), \ldots, \Phi(U^m)]$  Since $W_N$ is a vector of $3 \times m$ empirical quantiles, $W_N$ is an asymptotically normal and consistent estimator of the corresponding vector of true quantiles of $G_{\theta}$. Hence $\boldsymbol{\eta}_N$ must be an asymptotically normal and consistent estimator of $[\xi, \ldots, \xi] \in \mathbb{R}^m$. 

The Multi-Quantile estimator $X_N$ defined by  $ <\; \mathbf{w}, \boldsymbol{\eta}_N \; >$  is a linear function of $\boldsymbol{\eta}_N$, and hence must also be an asymptotically normal and consistent estimator of $\xi$ , with asymptotic covariance $\tau^2$ given by equation \eqref{eq:tau2}. 

Corollary \ref{q.opt} proved that $\Lambda(s, s)$ does not depend on $(\mu, \sigma)$. A similar proof shows that this is  also valid for $\Lambda(s, t)$.
\end{proof}

\begin{definition}\label{optim}(Optimized Multi-Quantile estimators of $\xi$). Same notations as in  theorem \ref{asym var multi}. Fix a set $M = \{ \boldsymbol{q}^1, \dots,\boldsymbol{q}^m\} $ of distinct  percentiles triples. Then $M$ defines a vector $\boldsymbol{\eta}_N = [\eta^1_N, \ldots, \eta^m_N]$ of three-quantiles estimators of the shape parameter $\xi$, and $\boldsymbol{\eta}_N$ has asymptotic covariance matrix $\Lambda$ given by \eqref{eq:covmulti}. For any vector of $m$ positive weights $w$, the Multi-Quantile estimator $< w, \eta_N >$ of $\xi$ has asymptotic variance $\tau^2 = \mathbf{w} \Lambda \mathbf{w}^*$.  Whenever the asymptotic covariance matrix $\Lambda$ is \emph{invertible}, the variance $\tau^2$ is classically minimized by the optimal vector of weights:
\begin{equation}\label{eq:w.opt}
\mathbf{w}_{\text{opt}} = \frac{1}{z^* \Lambda^{-1} z} \Lambda^{-1} z
\end{equation}
where $z \in \mathbb{R}^m$ has all its coordinates equal to 1. This defines the \emph{optimized Multi-Quantile estimator} of $\xi$ as $optX_N = < \mathbf{w_{opt}}, \mathbf{\eta}_N >$.  As seen above, for fixed $\xi$, then $optX_N$ is  an asymptotically normal and consistent estimator of $\xi$, with  asymptotic variance $\tau_{opt}^2$ given by
\begin{equation}\label{eq:tau.opt}
\tau_{\text{opt}}^2 = \frac{1}{z^* \Lambda^{-1} z}
\end{equation}
\end{definition}
The vector $\mathbf w_{\text{opt}}$ of optimal weights, as well as $\tau_{\text{opt}}^2$ do not depend on $(\mu, \sigma)$, but only on $\xi$ and on the set $M$. 

\subsection{Practical computation of  optimized  Multi-Quantile estimators}\label{subsection: practical method}
The  optimal vector of weights $\mathbf w_{\text{opt}}$   cannot be computed directly by \eqref{eq:w.opt} since $\xi$ is naturally unknown but can be consistently estimated by a small number of explicit iterations, as we now outline.  

Fix the set $M$ of $m$ distinct triples of percentiles. This defines  the vector $\boldsymbol{\eta}_N = [\eta^1_N, \ldots, \eta^m_N]$ of $m$ three-quantiles estimators for  $\xi$. The asymptotic covariance matrix $\Lambda$ of  $\boldsymbol{\eta}_N$ is then  a fully explicit  smooth function $L(\xi)$ of $\xi$, given  by  \eqref{eq:covmulti}. Fix the unknown $\xi$ and assume that \emph{the matrix  $\Lambda = L(\xi)$ is invertible} . 

Then  for $r >0$ small enough, $L(y)$ remains invertible whenever $y$ is in the interval $J(r)=(\xi-r,\xi+r)$, and $L(y)^{-1}$ is continuous for $y \in J(r)$. Let $W$ be the set of all vectors $\mathbf{w}$ of positive weights adding up to 1.   Any   Multi-Quantile estimator $X_N(\mathbf{w}) = <\mathbf{w}, \boldsymbol{\eta}_N > $ verifies $|X_N -\xi | \leq \sum_{s=1\ldots m} |\eta^s_N - \xi|$, and  each $\eta^s_N $  tends to $\xi$ in probability. Hence  
 $max_{\mathbf{w} \in W} |X_N(\mathbf{w}) -\xi | $ tends to 0 as $N\to\infty$. So for each small $\epsilon >0$, there is an $N_0$ such that  for each $N >N_0$
$$
P(X_N(\mathbf{w}) \in J(r) \; \text{for all} \; \mathbf{w} \in W) > 1-\epsilon  
$$
Fix $N>N_0$ and N observations sampled from the GEV $G_{\theta}$. Compute the vector of estimators $\boldsymbol{\eta}_N$ which will now  remain fixed. With probability $> (1-\epsilon)$,  the matrices $L^{-1}(X_N(\mathbf{w}) )$ will then be simultaneously  well defined for all $\mathbf{w} \in W$. So  we can  iteratively compute estimates $x(k)$ and $\mathbf{w}(k)$  of $x_i$ and $\mathbf{w}_{\text{opt}}$   by setting $x_0 = <z, \boldsymbol{\eta}_N>$ and

\begin{align*}
\mathbf{w}(k+1) &= \frac{1}{z^* L(x(k))^{-1} z} L(x(k))^{-1} z\\
x_(k+1) &= \; <\mathbf{w}(k) ,\boldsymbol{\eta}_N>
\end{align*}
where as above $z= [1,1,\ldots,1]$.
Fix the number of  iterations at a small value like $k\leq 5$. We have by construction 
$|x(5) -\xi| \leq \max_{s=1 \ldots m} |\eta^s_N -\xi|$. For $N\to \infty$, this upper bound of $|x(5) -\xi|$ tend to 0 , which  implies convergence  of $x(5)$ to $\xi$, and then convergence of $\mathbf{w}(5)$ to $\mathbf{w}_{\text{opt}}$.

\section{Multi-Quantile estimators with high  number of percentiles triples}\label{section: efficient}
\subsection{Robust sets of percentiles triples} \label{robust}
 A set $M$ of  $m$ distinct percentiles triples will be called \emph{robust} if the $m \times m$ covariance matrix $\Lambda= L(\xi) $ is indeed invertible for all 
values of the scale parameter $\xi$.\\
For each $m$, a natural question  is how to select "robust" sets $M$ of $m$ distinct percentiles triples. We have  rigorously analyzed this point  for small values of $m$, as reported in the Appendix. We have not pursued a full mathematical answer for generic large $m$, because we have  numerically  validated that the following   randomized  selection of $M$  yields robust sets $M$ with high probability.

\subsection{Randomized choice of robust sets of percentiles triples}\label{random M}
Fix any integer $m$ and  let $r = m+2$. Denote $E(r)$ the set of equidistant percentiles 
$j/r$ with $j=1,\ldots, (r-1)$. Randomly select  a set $M(m)$ of $m$  distinct  percentiles triples from  the set $E(r)$. We conjecture that for each fixed $m$ this randomly selected set $M(m)$ will be "robust" with very high probability $p(m)$.

We have numerically checked this conjecture for $m \leq 100$, and our estimates indicate that $p(m)$  increases with $m$. 

\subsection{Accuracy of Multi-Quantile estimators based on high number of quantiles}
To evaluate how large $m$ values impact the variance of optimized Multi-Quantile estimators of $\xi$, we have implemented the following numerical tests.

\begin{enumerate}
    \item Fix the shape parameter $\xi$.
    \item Successively increase  the number $m$ of percentiles triples from  $ 10$ to $100$
    \item For each fixed pair $\xi, m$ , randomly select a set $M(m)$ of $m$  distinct  percentiles triples within the set $E(r)= \{ 1/r, 2/r, \dots, (r-1)/r\}$ of $r= m+2$ equidistant percentiles. Check numerically if $M(m)$ is robust. If this is not the case, randomly select another set $M(m)$. Robustness of $M(m)$ is typically reached after very few such attempts.
    \item The set $M(m)$ defines an optimized Multi-Quantile estimator $X_{opt}(m)$ of $\xi$ as outlined in definition \ref{optim}  
    \item Using formulas \eqref{eq:covmulti} , \eqref{eq:w.opt}, \eqref{eq:tau.opt} compute the asymptotic variance $\tau^2_{opt}(m)$ of our optimized Multi-Quantile estimator $optX(m)$ 
\end{enumerate}

Note that these numerical computations are only based on our explicit theoretical expression for $\tau^2_{opt}(m)$, and hence do not at all involve simulating large size numbers $N$ of observations sampled from $G_{\theta}$. 

Typical results of this numerical study are displayed in Figure \ref{figure: diminsh error}. For fixed  $\xi\geq -0.5$, as  $m$ increases from 10 to 100, the asymptotic variance $\tau^2_{opt}(m)$ of our optimized Multi-Quantile estimator $optX(m)$ decreases rapidly to the Cramer-Rao lower bound $CRB(\xi)$. When $\xi < -0.5$, the lower bound 
$CRB(\xi)$ no longer exists, but $\tau^2_{opt}(m)$  still decreases as $m$ increases, to nearly reach its lowest value denoted $\tau^2_{opt}(\infty)$ .

\begin{figure}[!h]
     \centering
     \begin{subfigure}[b]{0.47\textwidth}
         \centering
         \includegraphics[width=\textwidth]{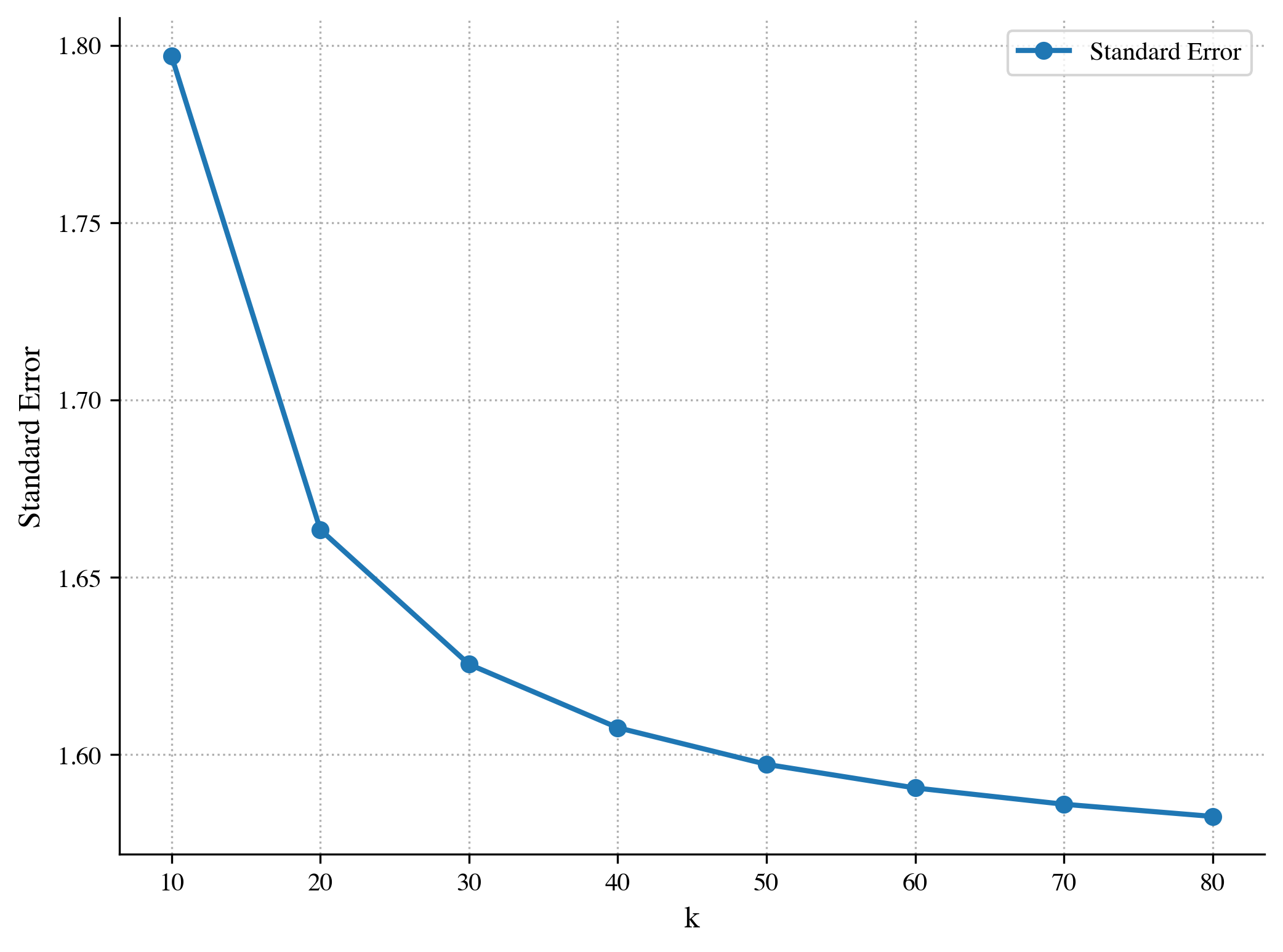}
         \caption{$\xi =-2$}
         \label{xi n2}
     \end{subfigure}     
     \hfill
     \begin{subfigure}[b]{0.47\textwidth}
       \centering
        \includegraphics[width=\textwidth]{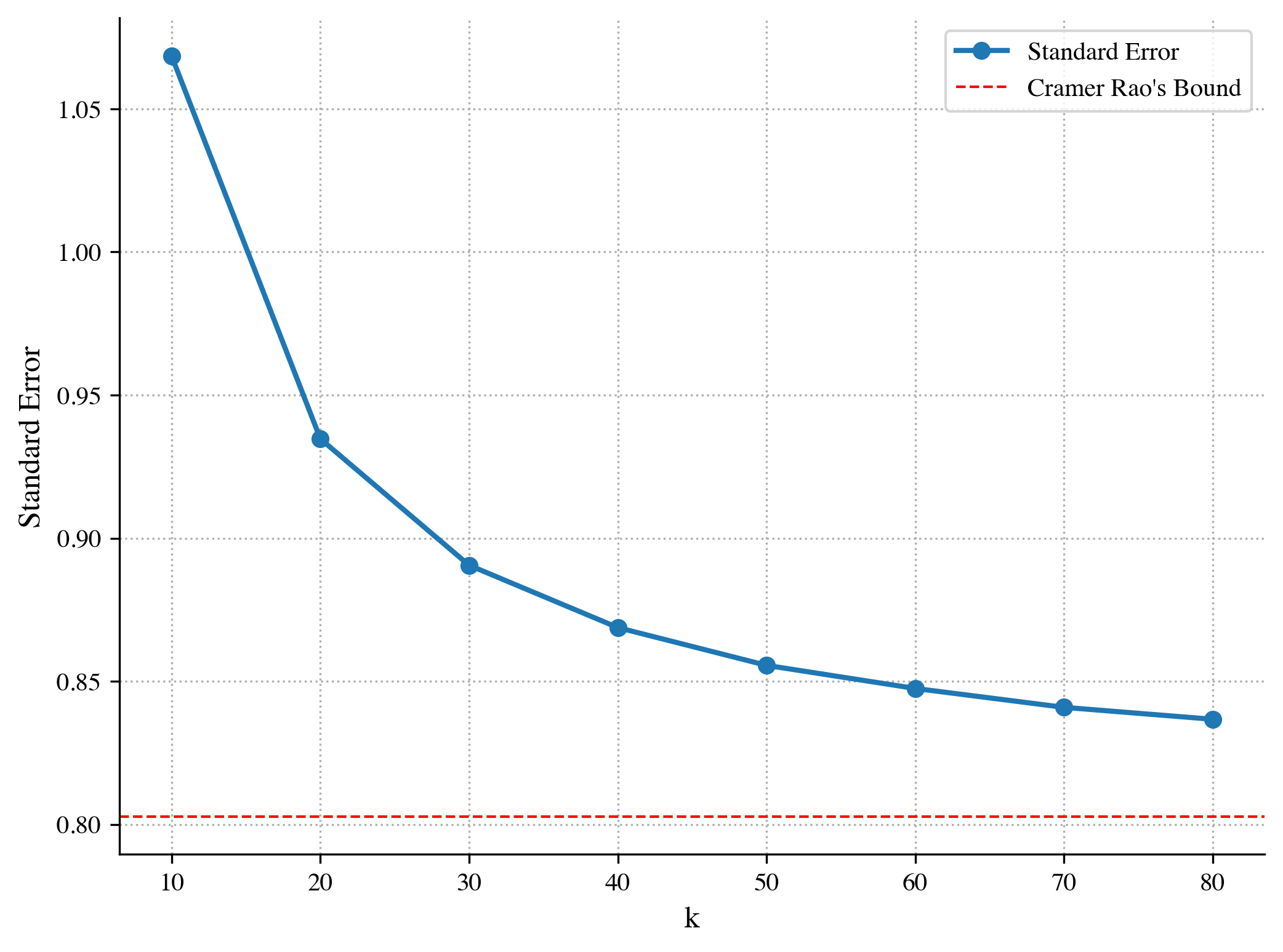}
         \caption{$\xi =0.2$}
        \label{xi p02}
      \end{subfigure}     
     
        \caption{For fixed shape parameter $xi$ and increasing number $m$ or percentage triples, note the decreasing asymptotic variance $\tau^2_{opt}(m)$ of optimized Multi-Quantile estimators $optX(m)$ of $\xi$ based on $m$ distinct percentiles triples.  For $\xi>-0.5$ the Cramer-Rao lower bound $CRB(\xi)$ on variances for  estimators of $\xi$ does exist, and as $m$ increases, $\tau^2_{opt}(m)$ decreases towards $CRB(\xi)$. For instance in figure (b) where $\xi= 0.2$, the value  $CRB(0.2) =0.805$ (red line) is nearly reached at $m=80$ with $\tau^2_{opt}(80) = 0.846$.  For $\xi <-0.5$ the Cramer-Rao bound does not exist, but $\tau^2_{opt}(m)$ still decreases to a lower limit $\tau^2_{opt}(\infty)$ as $m$ increases. For instance when $\xi= -2$ (figure (a) ) , $\tau^2_{opt}(m)$ decreases to $\tau^2_{opt}(80) = 0.76 $ .}
        \label{figure: diminsh error}
\end{figure}



\section{Comparison between our Multi-Quantile estimators and  classical estimators}\label{section: compare}
\subsection{Classical estimators of GEV parameters} 
Let $G_{\theta}$ be a GEV distribution with unknown $\theta= [ \xi,\mu,\sigma]$. Let $Y_1,...,Y_N$ be an  i.i.d random sample from $G_{\theta}$. To estimate $\theta$ from $Y_1,...,Y_N$, several  estimators have been proposed in the literature, and we have compared performances between our Multi-Quantile estimators and three other classical estimators: the Maximum Likelihood Estimator (MLE), the PWM estimator (Hosking, 1985), the DEH estimator (Dekkers, Einmahl,de Haan,  1989) 

\subsubsection{MLE estimators} This estimator  $\hat\theta_{MLE}$ aims to maximize in $\theta$ the log-likelihood function
 \begin{equation*}
        L(\theta)=\sum^N_{i=1} -\log\sigma+(1+1/\xi )\log(1+\xi \frac{y_i-\mu}{\sigma})-(1+\xi  \frac{y_i-\mu}{\sigma} )^{-1/\xi }
    \end{equation*}
by numerically solving the non-linear system of three equations $grad_{\theta} L = 0$ under the natural constraints 
\begin{equation*}
1 + \xi (Y_j-\mu)/\sigma > 0 \; ; \text{for all} \; j=1 \ldots N\}
\end{equation*}
When $\xi > - 0.5$, MLE estimators are asymptotically normal and consistent estimators of $\theta$ standard convergence rate $1/\sqrt{N}$  and their asymptotic covariance matrices can  be numerically evaluated (see Appendix \ref {appendix: asym xi}) However when $-1 < \xi < - 0.5$, the MLE estimators are not asymptotically normal, and when $\xi < -1$, the maximum of the likelihood function is  not even  well defined so that  consistent MLE estimators are not available .

\subsubsection{PWM estimators} Re-ordering the sample $Y_1,...,Y_N$ yields the vector  $Z_1 <  Z_2 < \ldots < Z_N$ of order statistics.  The  PWM estimator $\hat\xi_{PWM}$ of $\xi$ is defined only for $\xi <0.5$ and is then computed as the solution $x$ of
\begin{equation*}
    \frac{3^x-1}{2^x-1} = \frac{3\beta_2-\beta_0}{2\beta_1-\beta_0}
\end{equation*}
where
\begin{align*}
\beta_0 &= \frac{1}{N}\sum_{i=1}^N  Z_i \\ 
\beta_1 &= \frac{1}{N}\sum_{i=1}^N\frac{i-1}{N-1} Z_i \\
\beta_2 &= \frac{1}{N}\sum_{i=1}^N\frac{(i-1)(i-2)}{(N-1)(N-2)}Z_i
\end{align*}
For $\xi < 0.5$ the PWM  estimator is asymptotically normal with standard convergence rate $1/\sqrt{N}$ and has a numerically computable asymptotic variance (see Appendix \ref{appendix: asym xi} ). 

\subsubsection{DEH estimators} Given the sample $ Y_1,...,Y_N$,  the DEH estimator focuses only on the top $k= k(N) < N$  order statistics $Z_{N-k+1}, \ldots, Z_N$ of the sample. For asymptotic consistency the integers $k(N)$ are constrained by $\lim_{N\to\infty}k(N)=\infty$ and $\lim_{N\to\infty}\frac{k(N)}{N}=0$. Once $k(N)$ has been  selected, the DEH estimator of $\xi$ is computed for each $N$ by 

\begin{align*}
    H_1 &= \frac{1}{k}\sum^{k-1}_{i=0}   \log\frac{Z_{N-i}}{Z_{N-k}}\\
    H_2 &= \frac{1}{k}\sum^{k-1}_{i=0}   (\log\frac{Z_{N-i}}{Z_{N-k}})^2 \\
    \hat\xi_{DEH} &= H_1  +2\frac{H_1^2}{H_2} -1
\end{align*}
where $k= k(N)$. The DEH estimator  is  asymptotically normal without restrictions on $\xi$, but has a convergence rate $1/\sqrt{k(N)}$, which is much slower than the convergence rate $1/\sqrt{N}$ of all the other estimators considered here. Moreover efficient choice of $k(N)$ is strongly dependent on the unknown $\xi$ , a substantial disadvantage as will be clear below.
\subsection{Comparison of empirical estimation errors by finite sample simulation }
To evaluate the finite-sample behavior of MQ, MLE, PWM, and DEH estimators , we implemented an extensive empirical study by  Monte Carlo simulations of random samples from a GEV distribution  with fixed shape parameter $\xi$. Each sample had size $N=1000$, and $K=1000$ such samples were simulated  for each $\xi$ in the set $-3, -2, -1 , -0.2, 0, 0.2, 1,2$. 

For each $\xi$, we applied the MQ, MLE, PWM, DHE estimators to K simulated samples of size N . Each estimator  MQ, MLE, PWM, DHE  thus generated  $K$ estimates of $xi$, and hence provided the  empirical bias as well as the empirical standard errors of estimation $err_N(MQ)$, $err_N(MLE)$, $err_N(PWM)$, and $err_N(DHE)$. These accuracy results (bias and standard errors) for samples of size $N=1000$ are presented graphically in Fig. \ref{figure: empirical error}, and  Table \ref{table: estimation error} displays $err_N(MQ)$, $err_N(MLE)$, $err_N(PWM)$ and $err_N(DHE)$. These empirical results for moderate sample size $N= 1000$ validate concretely the theoretical standard errors presented in Table \ref{table: asym var}, which a priori rely on  much larger values of $N$. 

Our Multi-Quantile estimator MQ  maintains low $err_N(MQ)$ and almost no bias over the whole range of $\xi$ values. Moreover MQ  is available for any $\xi$ value, while both MLE and PWM require strong restrictions of $\xi$ values, namely $\xi > -0.5$ for MLE asymptotic normality, and $\xi<0.5$ for PWM availability. Within its restricted availability range,  MLE has the smallest standard error $err_N(MLE)$, but when $|\xi| \geq 0.5$ such as  $\xi = -3, -2, -1,  2$, our MQ estimator clearly stands out as the most accurate estimator.

For the  $DHE$ estimator, the number $k(N)$ of order statistics defining $DHE$  was uniformly set at $k(N)=100$. But the best choice for $k(N)$ requires unavailable previous knowledge of $\xi$, and hence some yet undefined iterative version of $DHE$. So we mitigated our uniform choice of $k(N)$ by increasing $N$ to $N=10,000$. However even this strongly increased sample size did not rescue the weak accuracy of $DHE$,  which for each $\xi$ value clearly had the worst  $err_N(DHE)$ among the 4 estimators tested here . 


\subsection{Comparison of asymptotic variances}
Detailed formulas for the asymptotic variances of  PWM estimators, and  for MLE estimators when $\xi >0.5$ are given in  the appendix (see Appendix \ref{appendix: asym xi}). Since  DEH estimators of $\xi$ have convergence rate $1/\sqrt{k(N)}$ much slower than the standard $1/\sqrt{N}$, the DEH  asymptotic variances are not concretely relevant here , so we omit $DHE$ in our presentation of asymptotic results.

For our Multi-Quantile estimator MQ, we fix a set $E(100)$ of $99$ equidistant percentiles $0.01=q_1 < q_2 < \ldots < q_{99}=0.99$. We randomly select and then fix a robust set $M(98)$ of $m= 98$ distinct  percentiles triples. Using formula \eqref{eq:tau.opt} we compute the asymptotic variance $\tau^2_{opt}$ for the optimized Multi-Quantile estimator of $\xi$ defined by $M(98)$. We did note that for fixed $\xi$ the numerical value of $\tau^2_{opt}$ did not vary significantly whenever we did randomly select another robust $M(98)$.

Hence the three asymptotic variances $var_{MQ}, var_{MLE}, var_{PWM}$ of $MQ, MLE,PWM$ estimators of $\xi$ only  depend on  $\xi$, and were computed numerically from explicit formulas. Results are displayed in Fig. \ref{figure: asymptotic variance} : red curve for $var_{MQ}$, black dashed line for $var_{MLE}$, blue dashed line for $var_{PWM}$. For $\xi >-0.5$, one  has $var_{MLE} < var_{MQ}$ but $var_{MQ}$ remains remarkably close to $var_{MLE}$. For $\xi < -0.5$, the asymptotic variance $var_{MLE}$ does not exist any more but  $var_{MQ}$ always exists and remains significantly smaller than  $var_{PWM}$ with  a ratio $var_{PWM}/ var_{MQ} >> 1$  which keeps increasing as  $\xi$ decreases beyond $-1$. Recall also that  PWM itself does not exist for $\xi >0.5$

For a hypothetical set of $N=1000$ i.i.d. observations sampled from $G_{\theta}$ , we also compute the theoretical standard errors $\sqrt{N var_{MQ}}, \sqrt{N var_{MLE}}, \sqrt{N var_{PWM}}$ for estimators of $\xi$. We display these standard errors in Table \ref{table: asym var} for more concrete  comparisons with the empirical errors of estimation evaluated above for actually simulated samples of size N=1000.
\begin{figure}[!h]
    \centering
    \includegraphics[width=\textwidth]{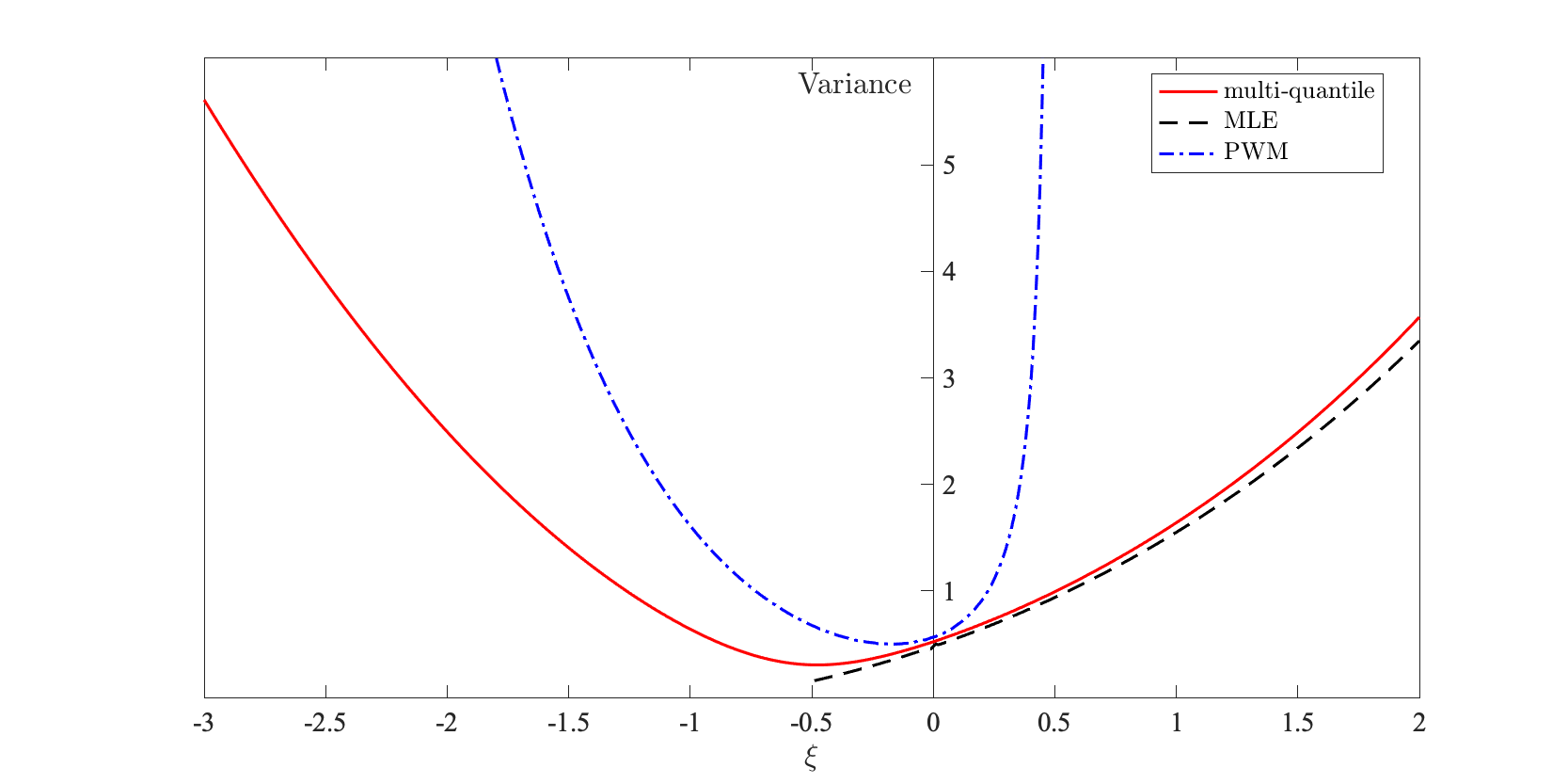}
    \caption{Asymptotic variances $var_{MQ}, var_{MLE}, var_{PWM}$ for three  estimators of shape parameter $\xi$. Solid line for $var_{MQ}$, red dashed line for $var_{MLE}$,  black dashed line for $var_{PWM}$. No formula is available for $var_{MLE}$ when $-1 < \xi <-0.5$ , and for $\xi<- 1$, the  standard MLE estimator  itself is not reliably computable.  For $\xi>0.5$, the estimator PWM  does not exist.} 
    \label{figure: asymptotic variance}
\end{figure}

\begin{table}[!h]
\centering
\caption{For each $\xi=-3,-2,-1,-0.2,0,0.2,1,2$ we have actually simulated $K=1000$ random  samples of size  $N=1000$, with CDF $G_{\theta}$. We have then applied the four estimators $MQ,MLE, PWM, DHE$ of $\xi$ to all these random samples to evaluate their empirical standard errors of estimation. The table displays the good accuracy of MQ over the whole range of $\xi$ values, and outperforms all other estimators for $|\xi| > 0.5$. }
\label{table: estimation error}
\resizebox{0.7\textwidth}{!}{%
\begin{tabular}{l|cccccccc|}
\cline{2-9}
                                     & \multicolumn{8}{c|}{$\xi$}                                            \\ \hline
\multicolumn{1}{|l|}{Estimator}         & -3     & -2     & -1     & -0.2   & 0      & 0.2    & 1      & 2      \\ \hline
\multicolumn{1}{|l|}{Multi-Quantile} & 0.094 & 0.062 & 0.039 & 0.030 & 0.033 & 0.036 & 0.053 & 0.082 \\ \hline
\multicolumn{1}{|l|}{MLE}            & NaN    & NaN    & NaN    & 0.019 & 0.023 & 0.026 & 0.040 & 0.729 \\ \hline
\multicolumn{1}{|l|}{PWM}            & 0.173 & 0.085 & 0.037 & 0.022 & 0.025 & 0.032 & NaN    & NaN    \\ \hline
\multicolumn{1}{|l|}{DHE}            & 0.676 & 0.416 & 0.214 & 0.107 & 0.100 & 0.101 & 0.138  & 0.222  \\ \hline
\end{tabular}%
}
\end{table}

\begin{table}[!h]
\centering
\caption{ For each  $\xi=-3,-2,-1,-0.2,0,0.2,1,2$, and for three estimators of $\xi$, namely our Multi-Quantile MQ, as well as MLE and PWM, we compute standard errors of estimation for a  nominal sample size  $N=1000$ by $\sqrt{N var_{MQ}}, \sqrt{N var_{MLE}},\sqrt{N var_{PWM}}$. Here the asymptotic variances $var_{MQ}, var_{MLE},var_{PWM}$ are derived from  theoretical formulas. Hence no simulations of random samples was necessary for this table. To explain the NaN values in this table, recall that asymptotic variances for MLE are not available when $\xi <- 0.5$. Indeed consistent MLE is not even reliably computable for $\xi <-1$. Similarly, PWM does not exist for $\xi > 0.5$. }
\label{table: asym var}
\resizebox{0.7\columnwidth}{!}{%
\begin{tabular}{l|cccccccc|}
\cline{2-9}
                                     & \multicolumn{8}{c|}{$\xi$}                                                                                                                                                                                                      \\ \hline
\multicolumn{1}{|l|}{Estimator}         & -3    & -2    & -1    & -0.2  & 0   & 0.2   & 1     & 2 \\ \hline
\multicolumn{1}{|l|}{Multi-Quantile} &  0.075 &    0.050&    0.025&    0.020&       0.023  &  0.026 &    0.041 &   0.060              \\ \hline
\multicolumn{1}{|l|}{MLE}            & NaN   & NaN   & NaN   & 0.018 & 0.021 & 0.025 & 0.039 & 0.058                  \\ \hline
\multicolumn{1}{|l|}{PWM}            & 0.185&    0.090&    0.040&    0.022&       0.024 &   0.030 & NaN   & NaN                    \\ \hline
\end{tabular}%
}
\end{table}

\begin{figure}[!h]
     \centering
     \begin{subfigure}[b]{0.7\textwidth}
         \centering
         \includegraphics[width=\textwidth]{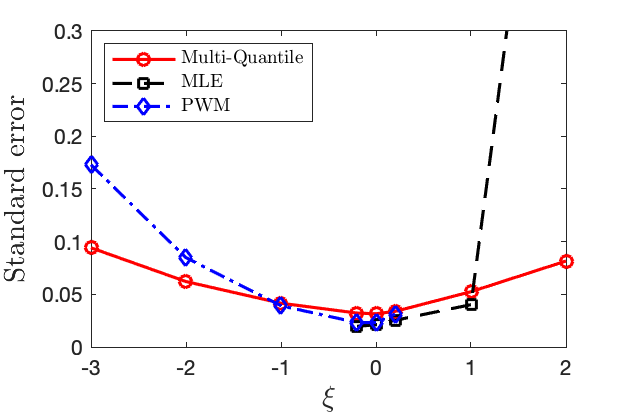}
         \caption{standard error}
         \label{std error}
     \end{subfigure}     
        \caption{Plot of empirical bias and standard error in the  estimation of $\xi$. The three plots on each image correspond to the estimators  MLE, PWM, and our Multi-Quantile MQ. For each $\xi=-3,-2,-1,-0.2,0,0.2,1,2$, these empirical results are computed by simulating $K=1000$ random sammples of size $N=1000$.}
        \label{figure: empirical error}
\end{figure}

\subsection{Comparison of computation times between  MLE and Multi-Quantile  estimators}
We have compared (on a standars laptop) the computation times of MLE and our MQ estimators for simulated samples having a GEV distribution with \(\xi = 0.2\). Sample sizes $N$ ranged from $1000$ to $10,000$. Figure \ref{figure: computation} displays the median CPU times $med_{MQ}$ and $med_{MLE}$, with error bars at  the $5\%$ and $95\%$ quantiles. Clearly  one always has $med_{MQ} < med_{MLE}$ and the ratio $med_{MLE}/med_{MQ}$ increases steadily from $1.3$ to $1.5$ as sample size $N$ increased from $1000$ to $10,000$.

\begin{figure}[!h]
     \centering
     \begin{subfigure}[b]{0.7\textwidth}
         \centering
         \includegraphics[width=\textwidth]{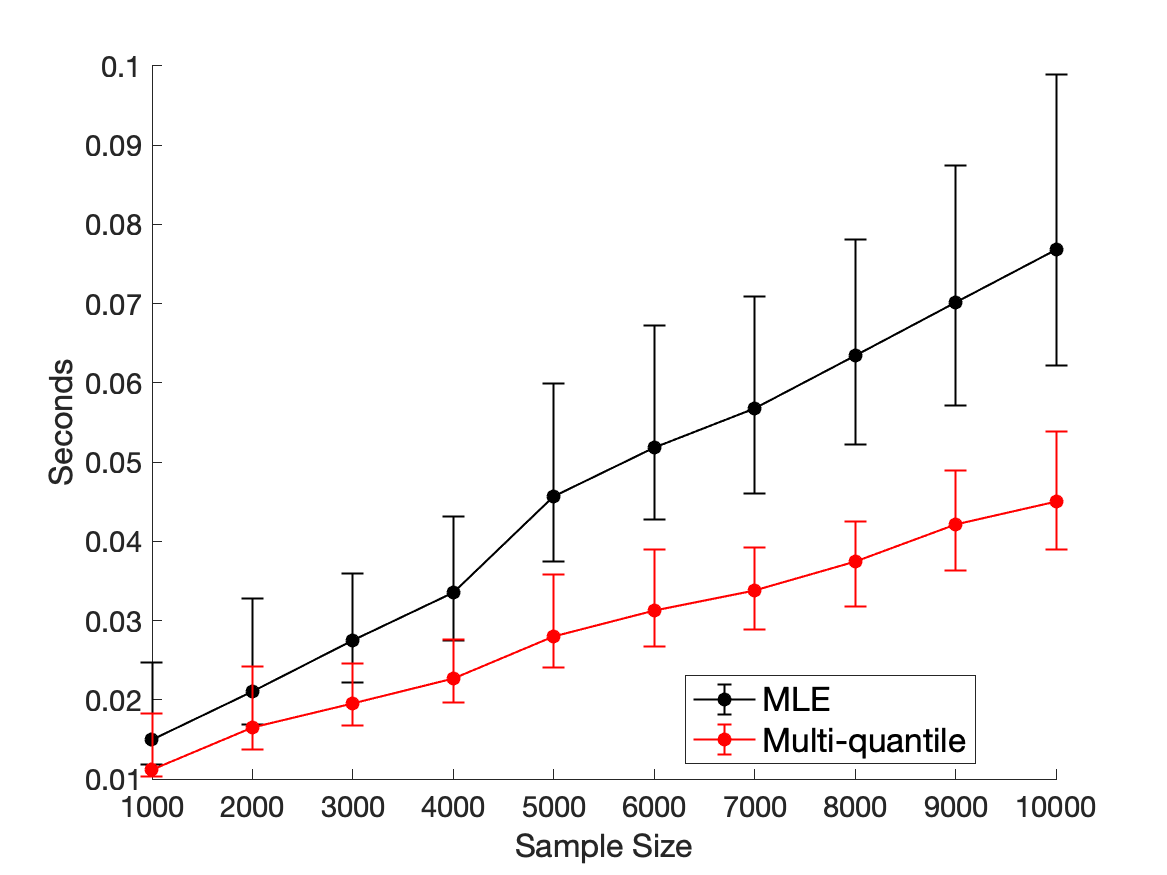}
         \label{fig: comp_sec}
     \end{subfigure}     
     %
        \caption{Comparison of computation times between Multi-Quantile MQ and MLE estimators of the shape parameter $\xi$. The graph displays median CPU times $med_{MQ}$ and $med_{MLE}$ for sample sizes $N$ increasing from $1000$ to $10,000$. Clerly,  $med_{MQ}$ is significantly smaller than $med_{MLE}$, and the ratio $med_{MLE}/med_{MQ}$ increases steadily from $1.3$ to $1.5$ when $N$ increases. }
        \label{figure: computation}
\end{figure}

\section{Extension to Block Maxima estimators}\label{section: block maxima}
We will now extend our Multi-Quantile estimators to the more general Block Maxima framework developed in e.g. \cite{ferreira2015} and \cite{dombry2019}. 

\subsection{Domain of attraction}

Write for short $G(\xi)$ instead of $G_{[\xi,0,1]}$. Let $Z_1, \ldots, Z_N$ be an i.i.d sequence of random variables having the same CDF $F(z)$.  Denote $U_N= \max\left({Z_1, \ldots, Z_N}\right)$, and as in the block maxima literature, we will assume that \emph{ $F$ belongs to the max-domain of attraction $D(\xi)$ } of the GEV $G(\xi)$. This means that there are two sequences $a_N>0$ and $b_N$ such that, the recentered and rescaled variables $\frac{U_N-b_N}{a_N}$ converge in distribution to $G(\xi)$ as $N\to \infty$.

The domain of attraction $D(\xi)$ has been completely characterized by fairly technical conditions involving the rate of decay for the  tail of $F$ (see the reference book \cite{dehaan2006}). However simpler sufficient conditions can define more restricted but still practically useful subsets of $D(\xi)$. For instance, let $S$ be the support of $F$, and let $z^*= \max(S)$, which is not required to be finite. Assume that $S$ contains a non-empty interval $J= (a, z^*)$ with $a$ finite, such that for $z\in J$, $F$ has two continuous derivatives $F'$ and $F''$, with $F'(z)>0$ and $\frac{1-F(z)}{F'(z)}$ tends to some finite $\xi$ as $z\to z^*$. Then $F$ must belong to the domain of attraction of $D(\xi)$, denoted as $F\in D(G_\xi),\ \xi\in\mathbb R$. As standard examples, the Gaussian and exponential distributions belong to $D(0)$, and uniform distributions on finite intervals belong to $D(-1)$. 

\subsection{Block Maxima algorithm}

Block maxima algorithm aim to construct consistent estimators of $\xi$ based on the $N$ i.i.d. observations $Z_j$ having the same CDF $F(z)$.  Take $N= n m$, with $n$ and $m$ are large enough. Divide the set of N observations into $n$ consecutive blocks $B_1, ..., B_n$ of identical size $m$.  Define the $n$ \emph{block maxima} $Y_1, \ldots, Y_n$ by  $Y_i = $ maximum of all $Z_j$ in block $B_i$. 

Assume that $F\in D(\xi)$, and let $a_m, b_m$ be the associated recentering and rescaling constants associated to $F$. Fix any percentile $0 < q <1$. Denote $\hat{T}_{n,m}(q)$ the empirical quantile of $Y_1, \ldots, Y_n$ associated to $q$. As shown by theorem B.3.1 in \cite{dehaan2006}, when both the block size $m$ and the number of blocks $n$ tend to $\infty$, then normalized empirical quantile $H_{m,n}(q) = (\hat{T}_{n,m}(q) - b_m)/a_m$ tends to the true q-quantile $T(q)$ of $G(\xi)$.

Another technical theorem proved in de Haan (2015) and Alex(2019) introduces more constraints on the CDF $F$, to guarantee the asymptotic normality of $\sqrt{m} \left[(H_{m,n}(q) -T(q)\right]$. Fix $r$ percentiles $q_1 < \ldots < q_r$. The same two papers provide theoretical formulas for the asymptotic covariance matrix of the vector of empirical Multi-Quantile $\frac{1}{m} [\;H_{m,n}(q_1), \ldots,H_{m,n}(q_r)$
These formulas are however unwieldy to apply for practical computations since they depend on very precise properties of the unknown distribution $F$. Moreover the rescaling and recentering sequences $a_m, b_m$ are also unknown and need to be estimated from data. 

But the existing asymptotic results just quoted above indicate that the block maxima approach can enable the application of our Multi-Quantile estimators even when $F$ is unknown, provided one relies on empirical estimations of $a_m, b_m$ and on intensive simulations to derive empirical estimates of the relevant covariance matrices. Indeed we have  successfully tested the block maxima extension of our Multi-Quantile estimators for risk analysis on large sets of intraday stock prices data (see our companion paper \cite{Lin2024}).

%
%

\begin{funding}
This work was supported by the National Natural Science Foundation of China (72301126)
\end{funding}


\section{APPENDIX (A) : Optimal choice of percentiles for three-quantiles estimators}\label{section: opt3q}

As seen in Theorem \ref{th:3q normality}, the asymptotic variances of the three-quantiles estimators $ \hat\mu_N(\boldsymbol{q})$ and $ \hat\sigma_N(\boldsymbol{q})$ are essentially determined by the asymptotic variance of $\hat\xi_N(\boldsymbol{q})$. So we have first implemented a numerical computation of the optimal triplet $\boldsymbol{q} = [q_1(\xi) < q_2(\xi) < q_3(\xi)]$ minimizing the asymptotic variance of $\hat\xi_N(\boldsymbol{q})$. These results are displayed in Fig. \ref{figure: opt per} for $-5 \leq \xi \leq 5$, where one always has  $q_1(\xi) \leq 0.037 $, $0.027 \leq q_2(\xi) \leq 0.832$, and $q_3(\xi) \geq 0.827$. When $\xi$ increases from $-0.5$ to $4$, then $q_2(\xi)$ drops drastically from $0.832$ to $0.027$, and keeps decreasing when $\xi$ increases from $4$ to $5$. Interestingly, when $\xi\leq-0.5$, the optimal triplet is fixed at $q_1 = 0.037$, $q_2 = 0.832$ and $q_3 = 0.987$.

Denote \( CRB(\xi) \)  the classical Cramer-Rao lower bound for the asymptotic variance of any consistent estimator of \( \xi \). For \( \xi \leq -0.5 \), the Fisher information matrix of \( G_\theta \) involves a divergent integral, and \( CRB(\xi) \) is not well-defined.

However, for \( \xi > -0.5 \), one can numerically compute \( CRB(\xi) \) and compare it to the asymptotic variance \( avar(\xi) = AVAR(\boldsymbol{q}(\xi)) \) of the optimized Three-Quantiles estimator of \( \xi \). We have displayed our numerical comparisons in \ref{figure: opt var} for \( -0.5 < \xi \leq 5 \). The efficiency ratio \( CRB(\xi)/avar(\xi) \) is, of course, less than 1 but increases from {0.730 to 0.827} as \( \xi \) increases from \(-0.5\) to \(2\), and then decreases slightly to \( 0.817 \) as \( \xi \) continues to increase.

\begin{figure}[!h]
     \centering
     \begin{subfigure}[b]{0.48\textwidth}
         \centering
         \includegraphics[width=\textwidth]{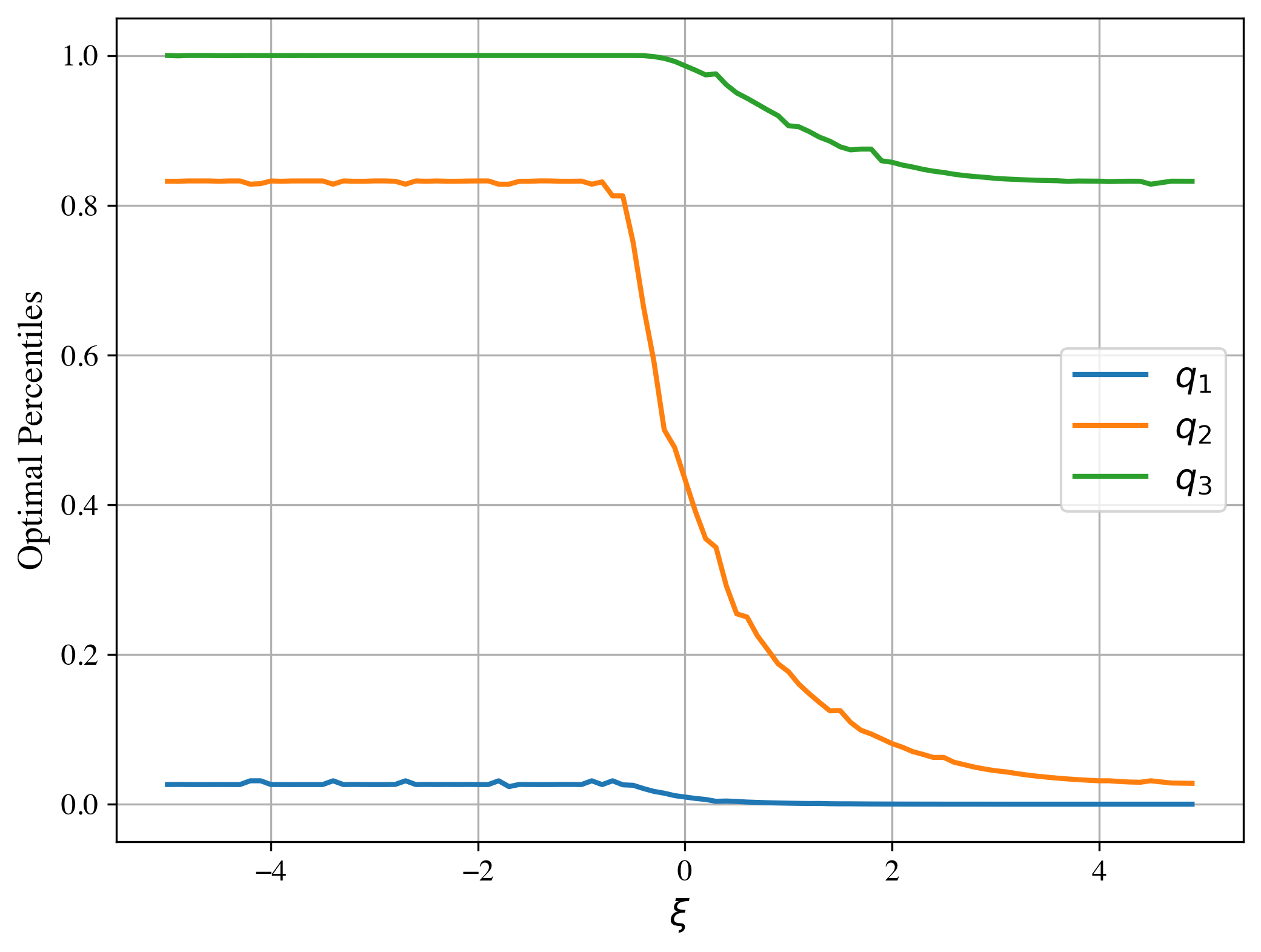}
         \caption{Optimal triplet of percentiles}
         \label{figure: opt per}
     \end{subfigure}     
     \hfill
     \begin{subfigure}[b]{0.48\textwidth}
         \centering
         \includegraphics[width=\textwidth]{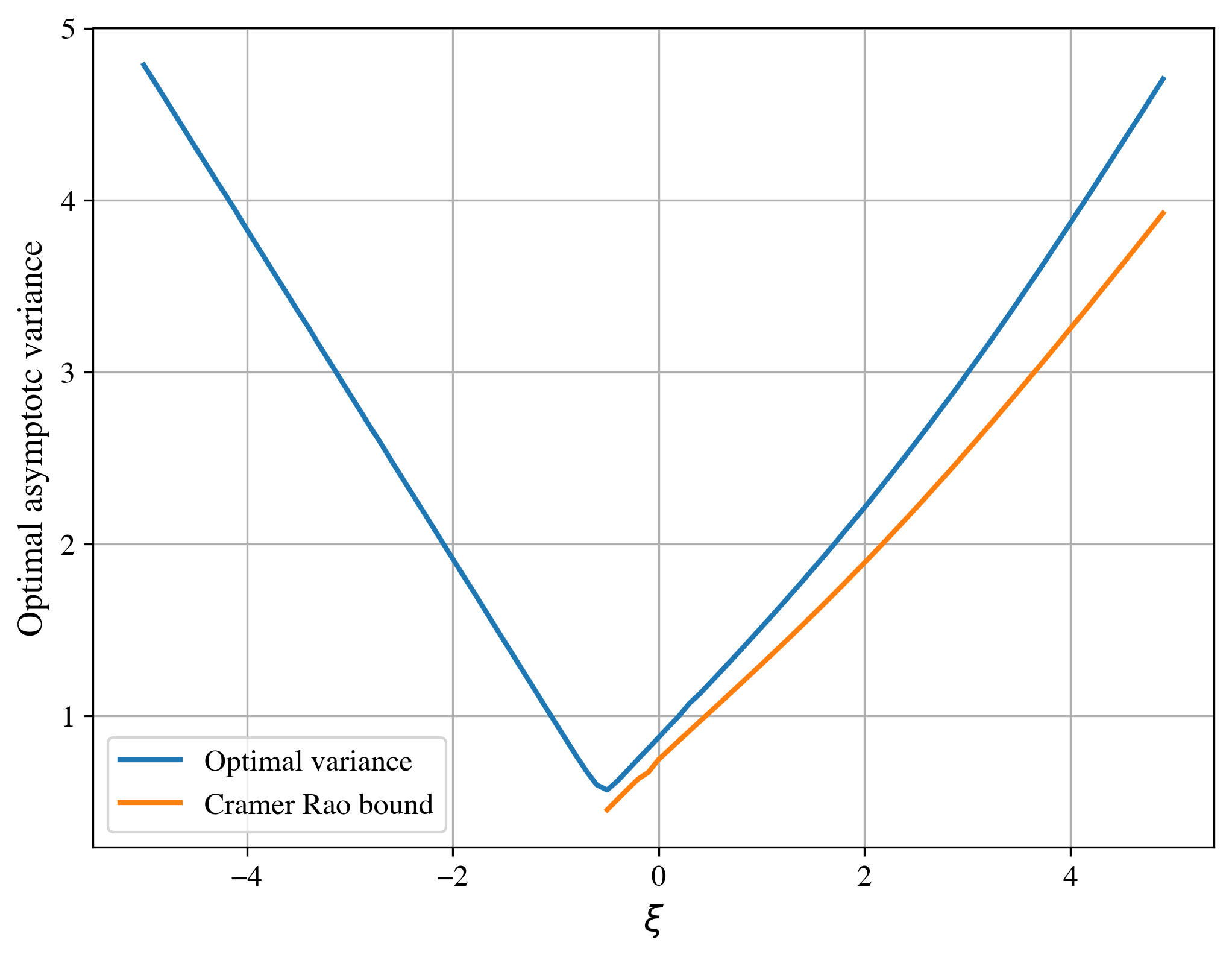}
         \caption{Optimal variance}
         \label{figure: opt var}
     \end{subfigure}     
     
        \caption{For $-5 \leq \xi \leq 5$ the left image displays the three optimal percentiles $\{ 0 < q_1(\xi) < q_2(\xi) < q_3(\xi) <1 \}$ minimizing the asymptotic variance $avar(\xi)$ of $\hat{\xi}_N$. This minimal asymptotic variance $avar^*(\xi)$ is displayed by the right image. For $-0.5 \leq \xi$ , note that $avar^*(\xi)$ is reasonably close to the Cramer-Rao lower  bound $CRB(\xi)$,  graphed in orange, and that both $avar^*(\xi)$ and $CRB(\xi)$ increase when $ \xi $ increases. For $\xi \leq -0.5$  but our optimal three quantiles remain nearly constant,  and $avar^*(\xi) /|\xi|$ remains at moderate levels. }
        \label{figure: opt}
\end{figure}

Our optimal three-quantiles estimators are easily computed numerically, but the optimal percentiles triple \(\boldsymbol{q(\xi)}\) depends on the unknown \(\xi\). This point  naturally led us to introduce the  much more efficient optimized Multi-Quantile estimators studied in this paper

\section{APPENDIX (B): Asymptotic variances for MLE, PWM and DEH estimators}\label{appendix: asym xi}
This section Asymptotic variance for MLE exists only when $\xi >-1/2$, PWM when $\xi<1/2$, and DEH when $\xi\in\mathbb{R}$.

\textbf{MLE}: When $\xi >-1/2$, \cite{bucher2017} proved the asymptotic normality of $\hat\theta_N^{(MLE)}$ namely 
\begin{equation}
    \sqrt{N}(\hat\theta_N^{(MLE)}-\theta)\xrightarrow{d}\mathcal{N}(0,J^{-1}_{\theta})
\end{equation}
where $J_\theta$ is the information matrix of $G_{\theta}$, given by 
\begin{equation}\label{equation: fisher}
    J_{\theta} = \begin{pmatrix}
        \frac{1}{\xi^2}(\frac{\pi^2}{6}+(1-\gamma +\frac{1}{\xi})^2-\frac{2q}{\xi}+\frac{p}{\xi^2}) \; & & \;  -\frac{1}{\xi}(q-\frac{p}{\xi}) \; & & \; -\frac{1}{\xi^2}(1-\gamma -q+\frac{1-r+q}{\xi}) \\
        -\frac{1}{\xi}(q-\frac{p}{\xi}) & & p & & -\frac{p-r}{\xi} \\
        -\frac{1}{\xi^2}(1-\gamma -q+\frac{1-r+q}{\xi}) \;  & & -\frac{p-r}{\xi} & & \frac{1}{\xi^2}(1-2r+p)
    \end{pmatrix}
\end{equation}
where $\Gamma$ is the Gamma function, $\gamma = 0.5772157$ is the Euler's constant and
\begin{equation*}
    p= (1+\xi)^2\Gamma(1+2\xi),\ \ q = (1+\xi)\Gamma'(1+\xi) + (1+\frac{1}{\xi})\Gamma(2+\xi),\ \ r = \Gamma(2+\xi)
\end{equation*}
(See \cite{beirlant2005} and \cite{prescott1980}).

\textbf{PWM}: The asymptotic normality   for the PWM estimator of $\xi$  was proved in \cite{ferreira2015}. Namely the asymptotic variance of $\sqrt{N}(\hat\xi_N^{(PWM)}-\xi)$ is the variance of the gaussian random variable $u Z$ where 
\begin{equation*}
    u = \frac{1}{\Gamma(1-\xi)}\left(\frac{\log 3}{1-3^{-\xi}}-\frac{\log 2}{1-2^{-\xi}}\right)^{-1} \;\;\text{and} \;
     Z = \left(\frac{\xi}{3^\xi-1}(X_2-X_0)-\frac{\xi}{2^\xi-1}(X_1-X_0)\right)
\end{equation*}
Here $X_0, X_1, X_2$ is a gaussian random vector with zero mean and covariance matrix given by 
\begin{align*}   
     Cov(X_r,X_l) =& (l+1)(r+1) \\
     &\int^1_0 \int^1_0 s^{r-1}(-\log s)^{-1-\xi}  u^{l-1}(-\log u)^{-1-\xi} (\min\{s,u\}-su) dsdu
\end{align*}
for $l,r=0,1,2$. See detailed computation of this asymptotic variance in \cite{ferreira2015}.

\textbf{DEH}: {\cite{dekkers1989} introduced a moment estimator of $\xi\in\mathbb R$. To differentiate it from the PWM method, we refer it to DEH estimator here. This estimator is based on the $k(N)$ largest observations from a sample size of $N$.}

\cite{dekkers1989} proved that $\sqrt{k(N)}(\hat\xi^{(DEH)}_{k(N),N}-\xi)$ is asymptotically normal with zero mean and variance
\begin{equation*}
\begin{cases}
   1+\xi^2, &\xi\geq 0 \\
   (1-\xi)^2(1-2\xi)\left[4-8\frac{1-2\xi}{1-3\xi}+\frac{(5-11\xi)(1-2\xi)}{(1-3\xi)(1-4\xi)}\right], &\xi<0
\end{cases}
\end{equation*}




\end{document}